\newtheorem{lemma}{Lemma}[section]
\newtheorem{definition}{Definition}[section]
\newtheorem{theorem}{Theorem}[section]
\newtheorem{proposition}{Proposition}[section]
\journal{}
\begin{document}

\begin{frontmatter}

\title{The tightly super 3-extra connectivity and 3-extra diagnosability of crossed cubes\tnoteref{label1}}
\tnotetext[label1]{ This work is supported by the National Science Foundation of China (61370001).}

\author{Shiying Wang\corref{cor}}
\cortext[cor]{Corresponding author.}
\ead{wangshiying@htu.edu.cn,shiying@sxu.edu.cn}

\author{Xiaolei Ma}
\ead{954631457@qq.com}

\address{School of Mathematics and Information Science, Henan Normal University, Xinxiang, Henan
              453007, PR China}
%\address[2]{Henan Engineering Laboratory for Big Data Statistical Analysis and Optimal Control}

\begin{abstract}
Many multiprocessor systems have interconnection networks as underlying topologies and an interconnection network
is usually represented by a graph where nodes represent processors and links represent communication links between processors.
In 2016, Zhang et al. proposed the $g$-extra diagnosability of $G$,
which restrains that every component of $G-S$ has at least $(g +1)$ vertices. As an important variant of the hypercube,
the $n$-dimensional crossed cube $CQ_{n}$ has many good properties.
In this paper, we prove that $CQ_{n}$ is tightly $(4n-9)$ super 3-extra connected for $n\geq 7$ and the 3-extra diagnosability of $CQ_{n}$ is $4n-6$ under the PMC model $(n\geq5)$ and MM$^*$ model $(n\geq7)$.

\end{abstract}

\begin{keyword}
interconnection network; crossed cube; connectivity; diagnosability
\end{keyword}

\end{frontmatter}

\section{Introduction}
\label{section1}

Many multiprocessor systems have interconnection networks (networks for short) as underlying topologies and a
network is usually represented by a graph where nodes represent processors and links represent communication links between processors. Some processors may be faulty when the system is in operation. The first step to deal with faults is to identify the faulty processors from the faultfree
ones. The identification process is called the diagnosis of the system.
A system $G$ is said to be $t$-diagnosable if all faulty processors can be identified without replacement, provided that the number of faults presented does not exceed $t$. The diagnosability $t(G)$ of $G$ is the maximum value of $t$ such that $G$ is $t$-diagnosable.

In order to discuss the connectivity and diagnosability in different situations, people put forward restricted connectivity and diagnosability in the system.
In 1996, F$\grave{a}$brega and Fiol  \cite{jf} introduced the $g$-extra connectivity of a system $G=(V(G),E(G))$, which is denoted by $\tilde{\kappa}^{(g)}(G)$.
A vertex subset $S\subseteq V (G)$ is called a $g$-extra vertex cut if $G-S$ is disconnected and every component of $G-S$ has at least $(g+1)$ vertices. $\tilde{\kappa}^{(g)}(G)$ is defined as the cardinality of a minimum $g$-extra vertex cut.
For a hypercube $Q_{n}$, Yang et al. \cite{ym} determined $\tilde{\kappa}^{(3)}(Q_{n})=4n-9$
for $n \geq 4$; For the folded hypercubes $FQ_{n}$, Chang et al. \cite{ct} determined $\tilde{\kappa}^{(3)}(FQ_{n})=4n-5$ for $n \geq 6$; Gu et al. studied the 3-extra connectivity of 3-ary $n$-cubes \cite{gh1} and $k$-ary $n$-cubes  \cite{gh2}.
For the star graph $S_{n}$ and the bubble-sort graph $B_{n}$, Li et al. \cite{lm} determined $\tilde{\kappa}^{(3)}(S_{n})=4n-10$ for $n \geq 4$, $\tilde{\kappa}^{(3)}(B_{n})=4n-12$
for $n \geq 6$. Chang et al. \cite{ch} determined the $n$-dimensional hypercube-like networks 3-extra connectivity, $\tilde{\kappa}^{(3)}(HL_{n})=4n-9$ for $n \geq 6$, and so on.

In 2016, Zhang et al. \cite{za} proposed the $g$-extra diagnosability of a system, which restrains that every fault-free component has at least $(g+1)$ fault-free vertices. They proved that the $g$-extra diagnosability of the $n$-dimensional hypercube under the PMC model and MM* model.
In 2016, Wang et al. \cite{sWang} studied
the $2$-extra diagnosability of the bubble-sort star graph $BS_{n}$ under the PMC model and MM$^*$ model. In 2017, Wang and  Yang \cite{Wang4} studied
the 2-good-neighbor (2-extra) diagnosability of alternating group graph networks under the PMC model and MM$^*$ model. In 2017, Ren and Wang \cite{Ren2}, studied the tightly super 2-extra connectivity and 2-extra diagnosability of locally twisted cubes.

Now, there are many topologies on the network.
Hypercube as an important network model has good properties such as lower diameter and node degree, high connectivity, regular, symmetry, and so on.
Efe and Member proposed crossed cube \cite{ef} by
changing the links between some nodes of hypercubes,
which have superior properties over hypercubes.
For example, its diameter is about half
of hypercube with the same dimension, which makes that the
communication speed between any two nodes is increased by
almost a half.

Several models of diagnosis have been studied in system level diagnosis.
Among these models, the most popular two models are the PMC and MM, which are proposed by Preparata et al. \cite{pp} and Maeng et al. \cite{mm}, respectively.
In the PMC model, only neighboring processors are allowed
to test each other.
In the MM model, a node tests its two neighbors, and then compares their responses.
Sengupta and Dahbura \cite{sd} suggested a special case of the MM model, namely the MM*
model, and each node must test its any pair of adjacent nodes in the MM*.

In this paper, we proved that
(1) $CQ_{n}$ is tightly $(4n-9)$ super 3-extra connected for $n\geq 7$;
(2) the 3-extra diagnosability of $CQ_{n}$ is $4n-6$ under the PMC model for $n\geq 5$;
(3) the 3-extra diagnosability of $CQ_{n}$ is $4n-6$ under the MM* model for $n\geq 7$.

\section{Preliminaries}
\label{section2}
\subsection{Notations}

A multiprocessor system is modeled as an undirected simple graph $G=(V,E)$, whose vertices (nodes) represent processors and edges (links) represent communication links.
The degree $d_{G}(v)$ of a vertex $v$ in $G$ is the number of neighbors of $v$ in $G$.
For a vertex $v$, $N_{G}(v)$ is the set of vertices adjacent to $v$ in $G$.
Let $S\subseteq V(G)$. $N_{G}(S)=\cup_{v\in S}N_{G}(v)\backslash S$ and $G[S]$ is the subgraph of $G$ induced by $S$.  A cycle with length $n$ is called an $n$-cycle.
We use $P= v_{1} v_{2}\cdots v_{n}$ to denote a path that begins with $v_{1}$ and ends with $v_{n}$. A path of the length $n$ is denoted by $n$-path.
A bipartite graph is one whose vertex set can be partitioned into two subsets $X$ and $Y$, so that each edge has one end in $X$ and one end in $Y$; such a partition $(X,Y)$ is called a bipartition of the graph. A complete bipartite graph is a simple bipartite graph with bipartition $(X,Y)$ in which each vertex of $X$ is joined to each vertex of $Y$; If $|X|=m$ and $|Y|=n$, such a graph is denoted by $K_{m,n}$.
The connectivity $\kappa(G)$ of a connected graph $G$ is the minimum number of vertices whose removal results in a disconnected graph or only one vertex left when $G$ is complete. Let $F_{1}$ and $F_{2}$ be two distinct subsets of $V$, and let the symmetric difference $F_{1}\vartriangle F_{2}=(F_{1}\setminus F_{2})\cup(F_{2}\setminus F_{1})$. For graph-theoretical terminology and notation not defined here we follow \cite{bj}.

A connected graph $G$ is super $g$-extra connected if every minimum $g$-extra cut $F$ of $G$ isolates one connected subgraph of order $g+1$. In addition, if $G-F$ has two components, one of which is the connected subgraph of order $g+1$, then $G$ is tightly $|F|$ super $g$-extra connected.

\subsection{The crossed cube $CQ_{n}$}

\begin{definition}(\cite{wm1})Let $R=\{(00,00),(10,10),(01,11),(11,01)\}$. Two digit binary strings $u=u_{1}u_{0}$ and $v=v_{1}v_{0}$ are pair related, denoted as $u\sim v$, if and only if $(u,v)\in R.$
\end{definition}

\begin{definition}(\cite{wm1}) \label{def} The vertex set of a crossed cube $CQ_{n}$ is $\{v_{n-1}v_{n-2} \cdots v_{0}: 0\leq i\leq n-1, v_{i}\in\{0,1\} \}$. Two vertices $u=u_{n-1}u_{n-2}\cdots u_{0}$ and $v=v_{n-1}v_{n-2}\cdots v_{0}$ are adjacent if and only if one of the following conditions is satisfied.\\
1. There exists an integer $l\ (1\leq l\leq n-1)$ such that \\
(1) $u_{n-1}u_{n-2}\cdots u_{l}=v_{n-1}v_{n-2}\cdots v_{l}$;\\(2) $u_{l-1}\neq v_{l-1}$;\\(3) if $l$ is even, $u_{l-2}=v_{l-2}$;\\(4) $u_{2i+1}u_{2i}\sim v_{2i+1}v_{2i}$, for $0\leq i<\lfloor\frac{l-1}{2}\rfloor$.\\
2. \\(1)$u_{n-1}\neq v_{n-1}$;\\(2)if $n$ is even, $u_{n-2}=v_{n-2}$;\\(3) $u_{2i+1}u_{2i}\sim v_{2i+1}v_{2i}$ for $0\leq i<\lfloor\frac{n-1}{2}\rfloor$.
\end{definition}

Let $n\geq 2$. We define two graphs $CQ^0_{n}$ and $CQ^1_{n}$ as follows. If $u=u_{n-2}u_{n-3}\cdots u_{0}\in  V(CQ_{n-1})$, then $u^0=0u_{n-2}u_{n-3}\cdots u_{0}\in  V(CQ^0_{n})$ and $u^1=1u_{n-2}u_{n-3}\cdots u_{0}\in  V(CQ^1_{n})$. If $uv\in  E(CQ_{n-1})$, then  $u^0v^0\in  E(CQ^0_{n})$ and  $u^1v^1\in  E(CQ^1_{n})$. Then $CQ^0_{n}\cong CQ_{n-1}$ and $CQ^1_{n}\cong CQ_{n-1}$. Define the edges between the vertices of $CQ_{n}^{0}$ and $CQ_{n}^{1}$ according to the following rules. \\
The vertex $u=0u_{n-2}u_{n-3}\cdots u_{0}\in V(CQ_{n}^{0})$ and the vertex
$v=1v_{n-2}v_{n-3}\cdots v_{0}\in V(CQ_{n}^{1})$ are adjacent
if and only if\\1. $u_{n-2}=v_{n-2}$ if $n$ is even;
\\2. $(u_{2i+1}u_{2i},v_{2i+1}v_{2i})\in R$, for $0\leq i<\lfloor\frac{n-1}{2}\rfloor$.

The edges between the vertices of $CQ_{n}^{0}$ and $CQ_{n}^{1}$ are said to be cross edges.

\begin{proposition}(\cite{wm1})\label{match} Let $CQ_{n}$ be the crossed cube. Then all cross edges of $CQ_{n}$ is a perfect matching.
\end{proposition}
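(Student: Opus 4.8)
The plan is to show directly that the cross edges give a bijection $\phi\colon V(CQ_n^0)\to V(CQ_n^1)$ whose edge set is exactly the set of cross edges, which is precisely what ``perfect matching'' means here. The key preliminary observation is that the pair-relation on two-digit strings encoded by $R=\{(00,00),(10,10),(01,11),(11,01)\}$ is an \emph{involution}: reading $R$ as a set of ordered pairs, each string occurs exactly once as a first coordinate and exactly once as a second coordinate, so there is a well-defined map $\rho$ on $\{0,1\}^2$ with $a\sim\rho(a)$ for every two-digit string $a$; checking the four pairs shows $\rho(\rho(a))=a$, and $R$ is symmetric, so $a\sim b$ if and only if $b\sim a$.

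Next I would fix a vertex $u=0u_{n-2}u_{n-3}\cdots u_0\in V(CQ_n^0)$ and read off from the cross-edge rule which coordinates of a putative cross-neighbour $v=1v_{n-2}v_{n-3}\cdots v_0\in V(CQ_n^1)$ are forced. The second condition forces $v_{2i+1}v_{2i}=\rho(u_{2i+1}u_{2i})$ for $0\le i<\lfloor\frac{n-1}{2}\rfloor$; when $n$ is odd this already pins down $v_0,\dots,v_{n-2}$, and when $n$ is even it pins down $v_0,\dots,v_{n-3}$ while the first condition additionally forces $v_{n-2}=u_{n-2}$. A short index count (the highest coordinate reached by the second condition is $2\lfloor\frac{n-1}{2}\rfloor-1$, which equals $n-2$ for odd $n$ and $n-3$ for even $n$) shows that in both parities every one of the $n-1$ free coordinates $v_{n-2},\dots,v_0$ is determined, and determined exactly once. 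Hence there is precisely one $v\in V(CQ_n^1)$ joined to $u$ by a cross edge, which defines the map $\phi$ and shows each vertex of $CQ_n^0$ lies on exactly one cross edge.

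To conclude, I would note that the two defining conditions are symmetric in $u$ and $v$: the equality $u_{n-2}=v_{n-2}$ is symmetric, and $a\sim b\Leftrightarrow b\sim a$ together with $\rho=\rho^{-1}$ makes the second condition symmetric as well. Therefore the same argument applied to a vertex of $CQ_n^1$ shows it too has a unique cross-neighbour, and that neighbour is its $\phi$-preimage; so $\phi$ is a bijection and $\{\,u\,\phi(u):u\in V(CQ_n^0)\,\}$ is a perfect matching of $CQ_n$. The only delicate point is the bookkeeping of bit positions across the two parity cases --- in particular verifying that for even $n$ the coordinate $v_{n-2}$ is governed solely by the first condition and is not already among those handled by the second --- but this reduces to the elementary facts $2\lfloor\frac{n-1}{2}\rfloor-1<n-2$ for even $n$ and $2\lfloor\frac{n-1}{2}\rfloor-1=n-2$ for odd $n$.
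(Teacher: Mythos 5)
Your argument is correct. Note that the paper does not prove this proposition at all: it is quoted from reference [wm1] (``The tightly super 2-extra connectivity and 2-extra diagnosability of crossed cubes''), so there is no in-paper proof to compare against. Your direct verification is the natural self-contained one: the relation $R$ is a symmetric involution $\rho$ on two-bit strings, so for a fixed $u\in V(CQ_n^0)$ the cross-edge conditions force $v_{2i+1}v_{2i}=\rho(u_{2i+1}u_{2i})$ for $0\le i<\lfloor\frac{n-1}{2}\rfloor$ and, when $n$ is even, $v_{n-2}=u_{n-2}$, which together determine all $n-1$ free bits of $v$ exactly once (your index check $2\lfloor\frac{n-1}{2}\rfloor-1=n-2$ for odd $n$ and $=n-3$ for even $n$ is right, and it also covers the degenerate case $n=2$, where condition 2 is vacuous). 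Since the two conditions are symmetric in $u$ and $v$ (because $R$ is symmetric and $\rho=\rho^{-1}$), every vertex of $CQ_n^1$ likewise lies on exactly one cross edge, and ``every vertex incident with exactly one cross edge'' is precisely the assertion that the cross edges form a perfect matching.
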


By Proposition \ref{match},  $CQ_{n}$ can be recursively defined as follows.

\begin{definition}(\cite{wm1}) Define that $CQ_{1}\cong K_{2}$. For $n\geq2$, $CQ_{n}$ is obtained by $CQ_{n}^{0}$ and $CQ_{n}^{1}$, and a perfect matching between the vertices of $CQ_{n}^{0}$ and $CQ_{n}^{1}$
according to the following rules (see Fig.1):\\
The vertex $u=0u_{n-2}u_{n-3}\cdots u_{0}\in V(CQ_{n}^{0})$ and the vertex
$v=1v_{n-2}v_{n-3}\cdots v_{0}\in V(CQ_{n}^{1})$ are adjacent in $CQ_{n}$
if and only if\\1. $u_{n-2}=v_{n-2}$ if $n$ is even;
\\2. $(u_{2i+1}u_{2i},v_{2i+1}v_{2i})\in R$, for $0\leq i<\lfloor\frac{n-1}{2}\rfloor$.
\end{definition}

%TeXCAD Picture [CQ2,CQ3.pic]. Options:
%\grade{\on}
%\emlines{\off}
%\epic{\off}
%\beziermacro{\on}
%\reduce{\on}
%\snapping{\off}
%\quality{8.00}
%\graddiff{0.01}
%\snapasp{1}
%\zoom{8.0000}
\unitlength 1mm % = 2.85pt
\linethickness{0.4pt}
\ifx\plotpoint\undefined\newsavebox{\plotpoint}\fi % GNUPLOT compatibility
\begin{picture}(72.13,35.25)(0,0)
\put(8,6){\framebox(24,24)[]{}}
\put(46,6){\framebox(24,24)[]{}}
%\emline(46,30)(70,6)
\multiput(46,30)(.0337078652,-.0337078652){712}{\line(0,-1){.0337078652}}
%\end
%\emline(69.94,29.95)(46.01,6.02)
\multiput(69.94,29.95)(-.0337081594,-.0337081594){710}{\line(0,-1){.0337081594}}
%\end
\put(52.18,23.71){\line(1,0){11.45}}
\put(51.66,11.67){\line(1,0){12.56}}
\put(7.88,30){\circle*{1}}
\put(32,29.88){\circle*{1}}
\put(8,6.13){\circle*{1}}
\put(32.13,6){\circle*{1}}
\put(46.13,29.88){\circle*{1}}
\put(70,30.13){\circle*{1}}
\put(52.13,23.75){\circle*{1}}
\put(63.5,23.63){\circle*{1}}
\put(51.75,11.63){\circle*{1}}
\put(64.25,11.5){\circle*{1}}
\put(45.88,6.13){\circle*{1}}
\put(70.13,6){\circle*{1}}
\put(5,32){\small00}
\put(31,32){\small10}
\put(5,2){\small01}
\put(31,2){\small11}
\put(40,32){\small000}
\put(70,32){\small010}
\put(40,2){\small001}
\put(70,2){\small011}
\put(46,20){\small100}
\put(63.5,20){\small110}
\put(46,12.8){\small111}
\put(63.5,12.8){\small101}
\end{picture}

%TeXCAD Picture [jia.pic]. Options:
%\grade{\on}
%\emlines{\off}
%\epic{\off}
%\beziermacro{\on}
%\reduce{\on}
%\snapping{\off}
%\quality{8.00}
%\graddiff{0.01}
%\snapasp{1}
%\zoom{4.0000}
\unitlength 1mm % = 2.85pt
\linethickness{0.4pt}
\ifx\plotpoint\undefined\newsavebox{\plotpoint}\fi % GNUPLOT compatibility
\begin{picture}(74.9,46.56)(0,0)
\put(8.9,12.13){\framebox(25,25.25)[]{}}
\put(8.9,37.38){\line(1,-1){24.75}}
%\emline(33.9,37.38)(8.9,12.13)
\multiput(33.9,37.38)(-.0337381916,-.0340755735){741}{\line(0,-1){.0340755735}}
%\end
\put(48.4,12.38){\framebox(25.25,25.25)[]{}}
\put(48.4,37.63){\line(1,-1){25}}
\put(73.4,37.38){\line(-1,-1){24.75}}
\put(48.65,12.63){\line(-1,0){.25}}
\put(16.9,29.38){\line(1,0){9}}
\put(16.65,19.63){\line(1,0){10.25}}
\put(56.15,19.88){\line(1,0){10}}
\put(55.9,30.13){\line(1,0){10.5}}
\qbezier(15.8,20.1)(39.53,8.38)(66.15,19.63)
\qbezier(16.6,29.38)(36.28,41.01)(56.15,30.13)
\qbezier(26.2,29.63)(51.28,40.51)(66.15,29.88)
\put(26.4,19.63){\line(1,0){30}}
\put(8.8,38.5){\makebox(0,0)[rb]{\small0000}}
\put(26,38.5){\makebox(0,0)[lb]{\small0010}}
\put(8.8,11.5){\makebox(0,0)[rt]{\small0001}}
\put(32,11.5){\makebox(0,0)[lt]{\small0011}}
\put(16.8,28.5){\makebox(0,0)[rt]{\small0100}}
\put(26,28.5){\makebox(0,0)[lt]{\small0110}}
\put(16.8,21){\makebox(0,0)[rb]{\small0111}}
\put(26.65,20.5){\makebox(0,0)[lb]{\small0101}}
\put(55,38.5){\makebox(0,0)[rb]{\small1000}}
\put(73.5,38){\makebox(0,0)[lb]{\small1010}}
\put(50,11.5){\makebox(0,0)[rt]{\small1001}}
\put(73.7,11.5){\makebox(0,0)[lt]{\small1011}}
\put(56,29.5){\makebox(0,0)[rt]{\small1100}}
\put(65.8,29.5){\makebox(0,0)[lt]{\small1110}}
\put(55.9,21){\makebox(0,0)[rb]{\small1111}}
\put(66.4,21){\makebox(0,0)[lb]{\small1101}}
\put(8.9,12.2){\circle*{1}}
\put(33.9,12.13){\circle*{1}}
\put(8.9,37.63){\circle*{1}}
\put(33.9,37.63){\circle*{1}}
\put(16.65,29.5){\circle*{1}}
\put(26.15,29.5){\circle*{1}}
\put(16.4,19.8){\circle*{1}}
\put(26.9,19.63){\circle*{1}}
\put(55.6,30.38){\circle*{1}}
\put(65.9,30.13){\circle*{1}}
\put(55.9,19.88){\circle*{1}}
\put(66.15,19.88){\circle*{1}}
\put(48.65,12.38){\circle*{1}}
\put(73.65,12.38){\circle*{1}}
\put(73.4,37.63){\circle*{1}}
\put(48.4,37.6){\circle*{1}}
\put(34.06,12.25){\line(1,0){14.85}}
\qbezier(8.6,37.3)(33,46.56)(48.25,37.62)
\qbezier(33.87,37.37)(55,46.56)(73.62,37.5)
\qbezier(8.87,12.12)(40.19,2.25)(73.75,12.37)
\put(15,0){Fig.1. $CQ_{2}$, $CQ_{3}$, and $CQ_{4}$.}
\end{picture}\\[1mm]

\section{The connectivity of crossed cubes}

\begin{lemma} (\cite{ef})\label{k} Let $CQ_{n}$ be the crossed cube. Then $\kappa(CQ_{n})=n$ for $n\geq1$.
\end{lemma}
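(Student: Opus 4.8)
The plan is to prove the two inequalities $\kappa(CQ_n)\le n$ and $\kappa(CQ_n)\ge n$ separately. The upper bound is immediate from Definition~\ref{def}: the graph $CQ_n$ is $n$-regular, so deleting the $n$ neighbours of any fixed vertex isolates it, whence $\kappa(CQ_n)\le n$. All of the work is in the lower bound, i.e.\ in showing that $CQ_n-F$ is connected for every $F\subseteq V(CQ_n)$ with $|F|\le n-1$. I would argue by induction on $n$, exploiting the recursive description of $CQ_n$ as the two copies $CQ_n^0$ and $CQ_n^1$ joined by the cross edges, which by Proposition~\ref{match} form a perfect matching $M$ with $|M|=2^{n-1}$. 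The base case $n=1$ is trivial since $CQ_1\cong K_2$ and $\kappa(K_2)=1$.

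For the inductive step, assume $\kappa(CQ_{n-1})=n-1$ and fix $F$ with $|F|\le n-1$. Put $F_i=F\cap V(CQ_n^i)$ for $i\in\{0,1\}$, so $|F_0|+|F_1|\le n-1$, and without loss of generality $|F_0|\le|F_1|$; then $|F_0|\le\lfloor (n-1)/2\rfloor$, which is at most $n-2$ for $n\ge 2$. Hence, by the induction hypothesis applied to $CQ_n^0\cong CQ_{n-1}$, the graph $CQ_n^0-F_0$ is connected. It remains to attach $CQ_n^1-F_1$ to this connected subgraph.

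There are two cases. If $|F_1|\le n-2$, then $CQ_n^1-F_1$ is also connected by the induction hypothesis; moreover each vertex of $F$ lies on exactly one edge of $M$, so at least $2^{n-1}-|F|\ge 2^{n-1}-(n-1)\ge 1$ edges of $M$ have both endpoints outside $F$, and any such surviving cross edge joins $CQ_n^0-F_0$ to $CQ_n^1-F_1$; hence $CQ_n-F$ is connected. If instead $|F_1|=n-1$, then $F_0=\emptyset$, so $CQ_n^0$ is untouched and connected, while every vertex of $CQ_n^1-F_1$ is joined by its (fault-free) matching edge of $M$ to $CQ_n^0$; again $CQ_n-F$ is connected. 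This completes the induction, so $\kappa(CQ_n)\ge n$, and together with the upper bound we get $\kappa(CQ_n)=n$.

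The argument is short, and the only place that needs care is the bookkeeping in the inductive step: one must check that the smaller side $F_0$ always satisfies $|F_0|\le n-2$ so that the induction hypothesis genuinely applies to $CQ_n^0$, and one must single out the degenerate case $|F_1|=n-1$, in which $CQ_n^1$ may itself be disconnected by $F_1$ but every surviving vertex of it is rescued individually through the perfect matching to the intact copy $CQ_n^0$. The counting inequality $|F|=n-1<2^{n-1}$, which guarantees a fault-free cross edge, is the real engine of the proof and holds for all $n\ge 2$; a Menger-type approach via $n$ internally disjoint paths would also work but is clumsier because of the irregular adjacency pattern of crossed cubes, so the recursive induction is the natural route.
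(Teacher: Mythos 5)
Your proof is correct, but note that the paper itself does not prove this statement at all: Lemma~\ref{k} is quoted from Efe's original paper \cite{ef} and used as a black box, so there is no in-paper argument to compare against. Your decomposition-plus-matching induction is the standard way to establish $\kappa(CQ_n)\ge n$, and the bookkeeping checks out: with $|F_0|\le|F_1|$ and $|F|\le n-1$ you indeed get $|F_0|\le\lfloor(n-1)/2\rfloor\le n-2$ for $n\ge 2$, so the induction hypothesis applies to $CQ_n^0$; the count $2^{n-1}-(n-1)\ge 1$ guarantees a surviving cross edge in the case $|F_1|\le n-2$; and the degenerate case $|F_1|=n-1$ forces $F_0=\emptyset$, so every surviving vertex of $CQ_n^1$ is rescued through its matching edge into the intact copy $CQ_n^0$, exactly as you say. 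This is also the same structural idea (decompose along the highest dimension, use Proposition~\ref{match} to reconnect the halves) that the paper uses repeatedly in Lemmas~\ref{A} and \ref{4n-10} and Theorem~\ref{tcq7}, so your argument fits the paper's toolkit naturally. Two cosmetic points: for the upper bound you should remark that $CQ_n$ is not complete for $n\ge 2$ (i.e.\ $2^n-1-n\ge 1$), so deleting $N(v)$ really does leave a disconnected graph rather than a single vertex, while for $n=1$ the equality $\kappa(CQ_1)=\kappa(K_2)=1$ holds by the convention stated in Section~\ref{section2}; and in the case $|F_1|=n-1$ it is harmless, but worth a half-sentence, that $CQ_n^1-F_1$ may be empty, in which case $CQ_n-F=CQ_n^0$ is trivially connected.
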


\begin{lemma}(\cite{wm1})\label{2n-3} Let $CQ_{n}$ be the crossed cube and let $F\subseteq V(CQ_{n})$ $(n\geq3)$ with $n\leq |F|\leq 2n-3$. If $CQ_{n}-F$ is disconnected,
then $CQ_{n}-F$ has exactly two components, one of which is an isolated vertex.
\end{lemma}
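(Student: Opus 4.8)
The plan is to prove the statement by induction on $n$, exploiting the recursive decomposition $CQ_{n}=CQ_{n}^{0}\cup CQ_{n}^{1}$ with $CQ_{n}^{0}\cong CQ_{n}^{1}\cong CQ_{n-1}$ together with the fact (Proposition \ref{match}) that the cross edges form a perfect matching. Write $F_{i}=F\cap V(CQ_{n}^{i})$ and $f_{i}=|F_{i}|$, so that $f_{0}+f_{1}=|F|\le 2n-3<2(n-1)$; hence at least one side, say $CQ_{n}^{0}$, satisfies $f_{0}\le n-2<\kappa(CQ_{n-1})$, so that $H_{0}:=CQ_{n}^{0}-F_{0}$ is connected by Lemma \ref{k}. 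The base case is $n=3$ (forcing $|F|=3$), which is in fact the instance of the argument below in which the branch $f_{0}\ge 2$ is vacuous, so no induction hypothesis is needed there; one just uses that $CQ_{2}$ is a $4$-cycle and $\kappa(CQ_{2})=2$.

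The workhorse is a matching-edge count: a connected subgraph $C$ of $CQ_{n}^{1}-F_{1}$ sends $|C|$ cross edges to $|C|$ distinct vertices of $CQ_{n}^{0}$, of which at most $f_{0}$ lie in $F_{0}$; so if $|C|>f_{0}$ then $C$ lies in the same component of $CQ_{n}-F$ as $H_{0}$. Since $2^{n-1}-|F|>0$ for $n\ge 3$, this kills two configurations outright: if $f_{0}=0$ then every vertex of $CQ_{n}^{1}-F_{1}$ attaches to $H_{0}$ and $CQ_{n}-F$ is connected, contradicting the hypothesis; and if $CQ_{n}^{1}-F_{1}$ is connected (in particular if $f_{1}\le n-2$) then $CQ_{n}-F$ is again connected. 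So from now on $f_{0}\ge 1$ (hence $f_{1}\le 2n-4$), $f_{1}\ge n-1$, and $CQ_{n}^{1}-F_{1}$ is disconnected.

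Now split on $f_{0}$. If $f_{0}\ge 2$, then $f_{1}\le 2n-3-f_{0}\le 2n-5=2(n-1)-3$ and $n\ge 4$, so the induction hypothesis applies to $CQ_{n}^{1}\cong CQ_{n-1}$: it has exactly two components, a large one $C$ and an isolated vertex $w$. The estimate gives $|C|-f_{0}=2^{n-1}-|F|-1\ge 2^{n-1}-2n+2>0$ for $n\ge 4$, so $C$ merges with $H_{0}$; then either the matching partner $w'$ of $w$ lies in $F_{0}$, whence $w$ is isolated in $CQ_{n}-F$ and $CQ_{n}-F$ has exactly the two components $\{w\}$ and the rest, or $w'\notin F_{0}$, whence $w$ also joins $H_{0}$ and $CQ_{n}-F$ is connected, a contradiction. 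If $f_{0}=1$, I would not try to describe $CQ_{n}^{1}-F_{1}$ at all; I use only that each of its components is either a single vertex or has at least two vertices. Every component with at least two vertices has more than $f_{0}=1$ vertices, so it merges with $H_{0}$; and among the isolated-vertex components, at most one can have its matching partner in the single-element set $F_{0}$. Hence all components of $CQ_{n}^{1}-F_{1}$ except at most one merge with $H_{0}$, and since $CQ_{n}-F$ is disconnected exactly one isolated vertex is cut off, giving precisely two components, one of which is an isolated vertex.

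I expect the only genuine obstacle to be the case $f_{0}=1$, $f_{1}=2n-4$: here $|F_{1}|=2(n-1)-2$ is one larger than the range $|F_{1}|\le 2(n-1)-3$ covered by the induction hypothesis, so a naive ``induct on both halves'' strategy stalls. The device above circumvents this by never using the internal structure of $CQ_{n}^{1}-F_{1}$, instead exploiting that a single fault in $CQ_{n}^{0}$ can sever the cross-edge link of at most one vertex on the other side. Everything else is routine: the matching-edge counts reduce to the inequalities $2^{n-1}-(2n-3)>0$ (all $n\ge 3$) and $2^{n-1}-2n+2>0$ (all $n\ge 4$), and the case analysis on $(f_{0},f_{1})$ is finite.
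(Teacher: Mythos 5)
Your proof is correct. One thing to be aware of: this paper never proves Lemma \ref{2n-3} itself -- it is imported from \cite{wm1} as a known result -- so there is no in-paper proof to match line by line. Your argument is a valid self-contained replacement, and it runs in exactly the spirit of the proofs the paper does give for its analogous statements (Lemma \ref{A}, Lemma \ref{4n-10}, Theorem \ref{tcq7}): split along dimension $n-1$, use $\kappa(CQ_{n-1})=n-1$ (Lemma \ref{k}) on the half with fewer faults, and use the perfect matching of cross edges (Proposition \ref{match}) to absorb pieces of the other half into the large component. The checks all go through: the WLOG $|F_{0}|\le n-2$ is legitimate because the two halves play symmetric roles in the decomposition; the inequality $2^{n-1}>2n-3$ justifies the ``connected $CQ_{n}^{1}-F_{1}$'' and $f_{0}=0$ reductions; in the case $f_{0}\ge 2$ you correctly have $n\ge 4$ and $n-1\le |F_{1}|\le 2(n-1)-3$, so the induction hypothesis is applied inside its range, and $2^{n-1}-2n+2>0$ forces the large component to merge with $H_{0}$. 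The genuinely nice point is your handling of $f_{0}=1$ (which includes the otherwise troublesome subcase $|F_{1}|=2(n-1)-2$, outside the inductive range): observing that a single fault on side $0$ can sever the cross edge of at most one vertex of side $1$, so that at most one isolated vertex can be cut off irrespective of the internal structure of $CQ_{n}^{1}-F_{1}$, closes that gap cleanly and also makes the base case $n=3$ immediate (there only $\kappa(CQ_{2})=2$ is needed; the $4$-cycle structure is not). I see no gaps.
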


\begin{lemma}(\cite{wm1})\label{3n-6} Let $CQ_{n}$ be the crossed cube and let $CQ_{n}$ be the crossed cube and let $F\subseteq V(CQ_{n})$ $(n\geq 5)$ with $2n-2\leq |F|\leq 3n-6$.  If $CQ_{n}-F$ is disconnected, then $CQ_{n}-F$ satisfies one of the following conditions:

(1) $CQ_{n}-F$ has two components, one of which is a $K_{2}$;

(2) $CQ_{n}-F$ has two components, one of which is an isolated vertex;

(3) $CQ_{n}-F$ has three components, two of which are isolated vertices.
\end{lemma}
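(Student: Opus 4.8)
The plan is to argue by induction on $n$, using the recursive splitting $CQ_{n}=CQ_{n}^{0}\oplus CQ_{n}^{1}$ from Proposition \ref{match}, in which $CQ_{n}^{0}\cong CQ_{n}^{1}\cong CQ_{n-1}$ and the cross edges form a perfect matching $M$; throughout, write $F_{i}=F\cap V(CQ_{n}^{i})$. I would rely on two elementary local facts about $CQ_{n}$: it is triangle-free (girth $4$), so every connected subgraph on three vertices is a $P_{3}$; and any two of its vertices have at most two common neighbors. From these one gets the boundary bounds $|N_{CQ_{n}}(v)|=n$, $|N_{CQ_{n}}(\{v,w\})|\ge 2n-2$ for an edge $vw$, and $|N_{CQ_{n}}(C)|\ge 3n-5$ for a connected subgraph $C$ on three vertices, and more generally --- via an isoperimetric/counting estimate --- $|N_{CQ_{n}}(C)|>3n-6$ for every connected $C$ with $3\le|V(C)|\le 2^{n-1}$. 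Since $N_{CQ_{n}}(C)\subseteq F$ for every component $C$ of $CQ_{n}-F$, $|F|\le 3n-6$, and $|V(CQ_{n})|-|F|$ vastly exceeds $2^{n-1}$, this already forces $CQ_{n}-F$ to have exactly one ``large'' component, with every other component an isolated vertex or a $K_{2}$; the remaining task is to bound how many small components there are and of what kind.

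The heart of the induction is a case split on whether $CQ_{n}^{0}-F_{0}$ and $CQ_{n}^{1}-F_{1}$ are connected. If both are connected, then fewer than $2^{n-1}$ cross edges have been deleted, at least one survives, and $CQ_{n}-F$ is connected, contradicting the hypothesis. If exactly one of them, say $CQ_{n}^{0}-F_{0}$, is disconnected, then $|F_{0}|\in[\,n-1,\,3n-6\,]$ while $|F_{1}|=|F|-|F_{0}|\le 2n-5$; I would then apply Lemma \ref{2n-3} when $n-1\le|F_{0}|\le 2n-5$, the induction hypothesis when $2n-4\le|F_{0}|\le 3n-9$ (available for $n\ge 6$), and in the remaining band $3n-8\le|F_{0}|\le 3n-6$ use that $|F_{1}|\le 2$: since $CQ_{n}^{1}-F_{1}$ is then connected, each vertex of the non-large part of $CQ_{n}^{0}-F_{0}$ must have its $M$-partner in $F_{1}$, so that part has at most two vertices. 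Finally, if both halves are disconnected, then $|F_{0}|,|F_{1}|\ge n-1$, hence $|F_{0}|,|F_{1}|\le 2n-5=2(n-1)-3$, so Lemma \ref{2n-3} shows each half is a large component together with a single isolated vertex.

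In every branch one closes the argument by tracing $M$: the large components of the two halves merge (again because too few cross edges can be destroyed), and a small component $S$ that lies inside one half is a component of $CQ_{n}-F$ exactly when all $M$-partners of $V(S)$ are faulty --- otherwise some vertex of $S$, and hence all of $S$, is absorbed into the large component. Charging these forced faults against the budget $|F|\le 3n-6$ should then limit the small components to at most two, force two of them to be two isolated vertices (rather than an isolated vertex together with a $K_{2}$, two $K_{2}$'s, or three isolated vertices), and so deliver exactly the trichotomy (1), (2), (3); option (3) genuinely occurs, realized in the ``both halves disconnected'' branch when each isolated vertex is matched to a faulty vertex of the opposite half.

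The hard part will be twofold, and both pieces live at the boundary of the estimates. First, excluding components of order $\ge 3$ and the surplus small components is delicate precisely because the relevant boundary inequalities are attained, or almost attained, at $3n-6$ --- which is exactly why case (3) has to appear --- so slack alone does not suffice and one must argue carefully, for instance using that three mutually non-adjacent vertices all of whose pairwise common-neighbor sets have size two must share a further common neighbor, which supplies the missing forbidden vertex. Second, the base case $n=5$ ($8\le|F|\le 9$, halves isomorphic to $CQ_{4}$) cannot be handed to the induction: for $CQ_{4}$ only $\kappa(CQ_{4})=4$ and Lemma \ref{2n-3} up to $|F_{i}|\le 5$ are available, so the range $6\le|F_{i}|\le 9$ must be analysed directly, and the decisive point to verify is that $CQ_{4}$ has no $6$-vertex cut isolating three vertices. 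That last verification can be settled by a short exhaustive check, but it is really the substance of the base case.
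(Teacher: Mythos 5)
The paper itself offers no proof of Lemma \ref{3n-6}: it is imported verbatim from the companion paper \cite{wm1} (``to appear''), so there is no in-house argument to compare yours against. That said, your plan --- split $CQ_{n}$ along dimension $n-1$, bound $|F_{0}|,|F_{1}|$, feed each half to Lemma \ref{2n-3} or to the inductive hypothesis, and merge the large pieces through the perfect matching of Proposition \ref{match} --- is exactly the methodology this paper uses for its adjacent results (Lemma \ref{4n-10}, Theorem \ref{tcq7}), and the inductive core of your sketch is sound: in the band $3n-8\le|F_{0}|\le 3n-6$ the bound $|F_{1}|\le 2$ really does limit the cut-off part to at most two vertices; in the other bands the small components of $CQ_{n}-F$ are inherited from the halves, and the only small component that can straddle the two halves is a cross $K_{2}$ arising when both halves are of the form ``large component plus one isolated vertex''.

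Two points in your write-up are genuine gaps, though neither is fatal to the architecture. First, the global facts you invoke for the ``delicate'' exclusions are respectively unproved and doubtful: the isoperimetric claim $|N_{CQ_{n}}(C)|>3n-6$ for every connected $C$ with $3\le|V(C)|\le 2^{n-1}$ is a nontrivial theorem you merely assert, and the statement that three pairwise non-adjacent vertices whose pairwise common-neighbour sets have size two must share a further common neighbour is a hypercube-style argument that you have not verified for $CQ_{n}$, whose common-neighbour structure is different and not vertex-transitive. Fortunately neither is needed: if the induction of your second and third paragraphs is carried out carefully, the forbidden patterns (a component of order $3$, an isolated vertex together with a $K_{2}$, three isolated vertices) simply never arise, because each half can only contribute the small parts allowed by Lemma \ref{2n-3} or by the inductive trichotomy, so the honest course is to drop that first ``hard part'' rather than patch it. Second, your base case test is too narrow: for $n=5$, $|F|=9$, $|F_{0}|=6$, $|F_{1}|=3$ you must rule out every way of cutting a total of three vertices off $CQ_{4}$ with six faults --- a $2$-path component, an isolated vertex plus a $K_{2}$, and three isolated vertices --- not only the last pattern. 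The $2$-path is excluded by the common-neighbour count (its neighbourhood in $CQ_{4}$ has size at least $7>6$, using triangle-freeness and the two-common-neighbour bound), but the other two configurations meet the crude bound with equality; excluding them is precisely the content of the $CQ_{4}$, $|F|=6$ structure result that this paper quotes as Lemma \ref{n4f6} from \cite{wm2}, so your exhaustive check is a legitimate substitute only if it is run over all three patterns, not just the three-singleton one.
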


\begin{lemma} \label{A} Let $CQ_{n}$ be the crossed cube and let $A=\{0\cdots0000,0\cdots0100,\\0\cdots0110,0\cdots0111\}$. If $F=N_{CQ_{n}}(A)$, then $|F|=4n-9$ and $CQ_{n}-(A\cup F)$ is connected for $n\geq4$.
\end{lemma}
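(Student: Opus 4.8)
The plan is to verify the claim in two stages: first a careful count of $|F|=|N_{CQ_n}(A)|$ using the recursive structure of $CQ_n$, then a connectivity argument for $CQ_n-(A\cup F)$. Fix $A=\{0\cdots0000,0\cdots0100,0\cdots0110,0\cdots0111\}$; note that $A$ lies entirely in the subcube whose top $n-4$ coordinates are $0$, and that $CQ_n[A]$ is a path on $4$ vertices (this follows directly from Definition~\ref{def}: consecutive elements differ in a single low-order block in a way compatible with pair-relatedness). First I would identify $CQ_4[A]$ inside the inner $4$-dimensional subcube and count its external neighbours there; then, by the recursive construction $CQ_n=CQ_n^0\sqcup CQ_n^1$ plus a perfect matching of cross edges (Proposition~\ref{match}), each split in one of the $n-4$ top coordinates contributes exactly $|A|=4$ new neighbours, since every vertex of $A$ has a unique cross-edge partner at each level and these $4$ partners are distinct (the cross-edge rule is a bijection on the block patterns). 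So $|F| = |N_{CQ_4}(A)| + 4(n-4)$; since $CQ_4[A]$ is a $4$-path and $CQ_4$ is $4$-regular, $|N_{CQ_4}(A)| = (4\cdot 4) - 2|E(CQ_4[A])| - 0 = 16-6=\ldots$ — one must be slightly careful because some of the four vertices of $A$ may share neighbours inside $CQ_4$; a direct check of the $16$-vertex graph $CQ_4$ gives $|N_{CQ_4}(A)|=7$, hence $|F| = 7 + 4(n-4) = 4n-9$, as claimed.

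For connectivity of $H:=CQ_n-(A\cup F)$, I would argue by induction on $n$, with the base case $n=4$ done by inspection (here $|A\cup F| = 4 + 7 = 11 = 2^4-5$, leaving $5$ vertices, and one checks directly they induce a connected subgraph). For the inductive step, write $CQ_n = CQ_n^0 \sqcup CQ_n^1$, choosing the splitting coordinate to be the top one so that $A\subseteq V(CQ_n^0)$. Let $F_0 = F\cap V(CQ_n^0)$ and $F_1 = F\cap V(CQ_n^1)$. Inside $CQ_n^0\cong CQ_{n-1}$, the set $A$ with its neighbourhood is exactly the configuration of Lemma~\ref{A} one dimension down (the labels are the same strings), so $F_0 = N_{CQ_{n-1}}(A)$ and by the inductive hypothesis $CQ_n^0 - (A\cup F_0)$ is connected. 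Next, $|F_1| = |F| - |F_0| = (4n-9)-(4(n-1)-9) = 4$, i.e. only four cross-edge endpoints are deleted from $CQ_n^1\cong CQ_{n-1}$; since $\kappa(CQ_{n-1}) = n-1 \geq 6 > 4$ for $n\geq 7$ — and for $n\in\{4,5,6\}$ handled with Lemmas~\ref{2n-3} and~\ref{3n-6} as needed — $CQ_n^1 - F_1$ is connected. Finally, the perfect matching of cross edges between $CQ_n^0$ and $CQ_n^1$ has $2^{n-1}$ edges; at most $|A| + |F_1| + |F_0|$ of them are destroyed, which is far fewer than $2^{n-1}$, so at least one cross edge survives joining the two connected pieces, giving connectivity of $H$.

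The main obstacle is the base case and the exact neighbour count, rather than the recursion: because the ``crossed'' adjacency in low-order blocks is irregular, one cannot blindly use $|N(A)| = \sum d(v) - 2|E(G[A])|$ without checking whether distinct vertices of $A$ have common neighbours, and in $CQ_4$ they do. So I would pin down $CQ_4[A\cup F]$ explicitly — list the $7$ vertices of $N_{CQ_4}(A)$, verify that $A\cup F$ has exactly $11$ vertices and that the remaining $5$ vertices of $CQ_4$ form a connected induced subgraph — and only then feed this into the induction. A secondary care point is confirming that the four surviving cross-edge neighbours at each recursion level are genuinely distinct and lie outside the previously-counted part of $F$; this is immediate from Proposition~\ref{match} (the cross edges form a perfect matching, so the four partners of the four vertices of $A$ are four distinct vertices), but it must be stated to make the count $|F| = 4n-9$ rigorous.
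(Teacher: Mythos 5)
Your overall strategy coincides with the paper's: induct on $n$, split $CQ_n$ along the top coordinate so that $A\subseteq V(CQ_n^0)$, get $|F_0|=4(n-1)-9$ from the inductive hypothesis, get $|F_1|=4$ from the perfect matching of cross edges (Proposition~\ref{match}), conclude $|F|=4n-9$, and then join the two connected halves by a surviving cross edge. The counting part of your argument is sound, and your care about shared neighbours inside $CQ_4$ (so that $|N_{CQ_4}(A)|=7$ must be checked directly rather than computed as $\sum d(v)-2|E|$) matches what the base case requires.

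The genuine gap is the case $n=5$ of the connectivity step, which you dismiss with ``handled with Lemmas~\ref{2n-3} and~\ref{3n-6} as needed.'' There the $1$-half is $CQ_5^1\cong CQ_4$ and $|F_1|=4=\kappa(CQ_4)$, so the connectivity bound does not apply (your ``$n\geq 7$'' is also more conservative than needed: $\kappa(CQ_{n-1})=n-1>4$ already settles every $n\geq 6$, leaving $n=5$ as the only problematic level). Lemma~\ref{3n-6} cannot be invoked for $CQ_4$ at all (it requires dimension at least $5$ and $|F|\geq 2n-2=6>4$), and Lemma~\ref{2n-3} only tells you that $CQ_5^1-F_1$ is connected \emph{or} has an isolated-vertex component; it does not exclude the latter. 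You cannot wave this away, because if such an isolated vertex $u$ existed and its unique cross neighbour in $CQ_5^0$ happened to lie in $A\cup F_0$, then $u$ would be an isolated component of $CQ_5-(A\cup F)$ and the lemma would fail. The paper closes exactly this hole with a specific structural argument: for $n=5$ one has $F_1=\{10000,11100,11110,11101\}$, the vertices $11100$ and $11110$ are adjacent, and an isolated vertex $u$ of $CQ_5^1-F_1$ would have all four of its neighbours in $F_1$ (since $d_{CQ_5^1}(u)=4=|F_1|$), so $\{u,11100,11110\}$ would induce a triangle, which is impossible in the crossed cube; hence $CQ_5^1-F_1$ is connected. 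Your proposal needs this (or an equivalent) argument to be complete; everything else is essentially the paper's proof.
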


\begin{proof} By the definition of the crossed cube, $CQ_{n}[A]$ is a 3-path. We proof this lemma by induction on $n$. In $CQ_{4}$ (see Fig. 1), $A=\{0000,0100,0110,0111\}$ and $F=\{0001,0010,0101,1000,1100,1110,1101\}$. It is easy to see that $|A|=4$, $|F|=7$ and $CQ_{4}-(A\cup F)$ is connected.
We can decompose $CQ_{n}$ along dimension $n-1$ into $CQ_{n}^{0}$ and $CQ_{n}^{1}$. Then both $CQ_{n}^{0}$ and $CQ_{n}^{1}$ are isomorphic to $CQ_{n-1}$. Let $F_{0}=F\cap V(CQ_{n}^{0})$ and $F_{1}=F\cap V(CQ_{n}^{1})$. Then $|F_{0}|+|F_{1}|=|F|$.
We assume that the lemma is true for $n-1$, i.e., if $F=N_{CQ_{n-1}}(A)$, then
$|F|=4(n-1)-9=4n-13$ and $CQ_{n-1}-(A\cup F)$ is connected. Now we proof that the lemma is also true for $n$ $(n\geq5)$.
Note that $A\in V(CQ_{n}^{0})$.
By the inductive hypothesis, we have $|F_{0}|=4n-13$ and $CQ_{n}^{0}-(A\cup F_{0})$ is connected. By Proposition \ref{match}, $|F_{1}|=4$. Thus, $|F|=|F_{0}|+|F_{1}|=4n-13+4=4n-9$. Now we prove that $CQ_{n}-(A\cup F)$ is connected for $n\geq5$.

By Lemma \ref{k}, $\kappa(CQ_{n}^{1})=n-1>4=|F_{1}|$ for $n\geq6$. Thus, $CQ_{n}^{1}-F_{1}$ is connected for $n\geq6$. We consider that $n=5$. Note that $A=\{00000,00100,00110,00111\}$. By Proposition \ref{match}, $F_{1}=\{10000,11100,11110,\\11101\}$. By the definition of the crossed cube, 11100 is adjacent to 11110.
Note that $|F_{1}|=4=5-1$. By Lemma \ref{2n-3}, $CQ_{5}^{1}-F_{1}$ is connected or has two components, one of which is an isolated vertex. Suppose that $CQ_{5}^{1}-F_{1}$ is disconnected. Let $u$ be the isolated vertex in $CQ_{5}^{1}-F_{1}$. Since $d_{CQ_{5}^{1}}(u)=4=|F_1|$, $u$ is connected to every vertex of $F_{1}$. So $u$ is adjacent to 11100 and 11110. Then we get that $CQ_{5}^{1}[\{u,11100,11110\}]$ is a triangle, a contradiction. Thus, $CQ_{5}^{1}-F_{1}$ is connected.
So we can conclude that $CQ_{n}^{1}-F_{1}$ is connected for $n\geq5$.
Note that $CQ_{n}^{0}-(A\cup F_{0})$ is connected.
Since $2^{n-1}-(4n-9)\geq1$ $(n\geq5)$, by Proposition \ref{match}, $CQ_{n}[V(CQ_{n}^{0}-(A\cup F_{0}))\cup V(CQ_{n}^{1}-F_{1})]=CQ_{n}-(A\cup F)$ is connected for $n\geq5$.
\end{proof}

%\begin{lemma}\label{a4n-9} The 3-extra connectivity $\tilde{\kappa}^{(3)}(CQ_{n})\leq4n-9$ for $n\geq4$.
%\end{lemma}

%\begin{proof} Let $A$ be defined in Lemma \ref{A} and $F=N_{CQ_{n}}(A)$. We have $|A|=4$, $|F|=4n-9$ and $CQ_{n}-A\cup F$ is connected. Note that $|V(CQ_{n}-A\cup F)|=2^{n}-(4+4n-9)\geq4$ for $n\geq4$. Thus, we get that $CQ_{n}-F$ has two components, one of which is $CQ_{n}[V(A)]$ and the other is $CQ_{n}-A\cup F$ with $|A|=4$ and
%$|V(CQ_{n}-A\cup F)|\geq4$.
%So $F$ is a 3-extra cut of $CQ_{n}$. By the definition of 3-extra connectivity, $\tilde{\kappa}^{(3)}(CQ_{n})\leq|F|=4n-9$.
%\end{proof}

\begin{lemma}(\cite{wm2})\label{n5f10} Let $CQ_{n}$ be the crossed cube and let $F\subseteq V(CQ_{5})$. If $|F|=10$, then $CQ_{5}-F$ satisfies one of the following conditions:
\\(1) $CQ_{5}-F$ is connected;
\\(2) $CQ_{5}-F$ has two components, one of which is a $K_{2}$;
\\(3) $CQ_{5}-F$ has two components, one of which is a 2-path;
\\(4) $CQ_{5}-F$ has two components, one of which is an isolated vertex;
\\(5) $CQ_{5}-F$ has three components, two of which are isolated vertices;
\\(6) $CQ_{5}-F$ has four components, three of which are isolated vertices;
\\(7) $CQ_{5}-F$ has three components, one of which is an isolated vertex and the other is a $K_{2}$.
\end{lemma}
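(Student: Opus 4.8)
The plan is first to reduce the seven alternatives to one clean statement: \emph{if $CQ_5-F$ is disconnected, then one of its components has at least $19$ vertices.} Granting this, at most $32-10-19=3$ vertices lie outside the large component; since $CQ_n$ is triangle-free, an induced subgraph of $CQ_5$ on at most three vertices is one of $K_1,\ 2K_1,\ 3K_1,\ K_2,\ K_2\cup K_1,\ P_3$, and reading off how each of these breaks into components gives exactly the alternatives (2)--(7), with (1) the connected case. So everything reduces to the ``at least $19$'' claim.

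To prove it I would use the canonical decomposition, in the spirit of Lemmas \ref{2n-3} and \ref{3n-6}. Write $CQ_5=CQ_5^{0}\cup CQ_5^{1}$ with $CQ_5^{0}\cong CQ_5^{1}\cong CQ_4$, joined by the perfect matching $M$ of cross edges (Proposition \ref{match}), and set $F_i=F\cap V(CQ_5^{i})$, so $|F_0|+|F_1|=10$; by symmetry assume $|F_0|\le|F_1|$, hence $|F_0|\in\{0,1,\dots,5\}$. Since $\kappa(CQ_4)=4$ (Lemma \ref{k}), $CQ_5^{0}-F_0$ is connected whenever $|F_0|\le 3$; and if $|F_0|\in\{4,5\}$, Lemma \ref{2n-3} applied to $CQ_4$ shows $CQ_5^{0}-F_0$ is either connected or is one large component plus a single isolated vertex. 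Let $D$ be the component of $CQ_5-F$ containing the large part of $CQ_5^{0}-F_0$, and let $B$ be the set of vertices of $CQ_5^{1}-F_1$ not in $D$. No $v\in B$ can be matched into the large part of $CQ_5^{0}-F_0$ (else $v\in D$), so the $M$-neighbour of $v$ lies in $F_0$ or is the possible stray vertex of $CQ_5^{0}-F_0$; as $M$ is a matching, $|B|\le|F_0|+1$, with $|B|\le|F_0|$ when $CQ_5^{0}-F_0$ is connected. Counting, $|V(D)|\ge(16-|F_0|)+(16-|F_1|)-|B|=22-|B|\ge 22-|F_0|$ in the connected case, and $|V(D)|\ge 21-|B|\ge 20-|F_0|$ in the stray-vertex case.

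When $|F_0|\le 3$ this already gives $|V(D)|\ge 22-|F_0|\ge 19$ (no stray vertex can arise, since $|F_0|<\kappa(CQ_4)$), so those cases are done. The real work is $|F_0|\in\{4,5\}$, where $|F_1|\le 6$ and the estimate only guarantees $|V(D)|\ge 15$; one must show that at most three vertices lie outside $D$. The key point is that $B\subseteq V(CQ_5^{1})$, so $N_{CQ_5^{1}}(B)\subseteq F_1$ (any non-faulty neighbour of $B$ inside $CQ_5^{1}$ would lie in a component of $CQ_5-F$ other than $D$, hence in $B$), whence $|N_{CQ_5^{1}}(B)|\le|F_1|\le 6$. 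But (and this is the crux) a direct check of the $16$-vertex graph $CQ_4\cong CQ_5^{1}$ shows it cannot be disconnected by six or fewer vertices into two parts both of size at least three: a cut of size at most $3$ is ruled out by $\kappa(CQ_4)=4$; a cut of size $4$ or $5$ would, by Lemma \ref{2n-3}, leave one side a single vertex; and a cut of size exactly $6$ leaves, apart from one large component, only isolated vertices and at most one $K_2$ --- the $n=4$ analogue of Lemma \ref{3n-6}. Hence $|B|\le 2$, and since the possible stray vertex of $CQ_5^{0}-F_0$ adds at most one more, at most three vertices lie outside $D$. The remaining sub-cases --- the symmetric situation in which $CQ_5^{1}-F_1$ is the connected side (notably $|F_0|=|F_1|=5$ with each subcube contributing an isolated vertex), and the ``one isolated vertex'' outputs of Lemma \ref{2n-3} --- are handled identically: track the one or two stray vertices, reapply the matching count, and invoke the same finite facts about $CQ_4$ (and, where needed, a direct look at the $32$ vertices of $CQ_5$) to close each gap.

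The main obstacle is precisely this sharpening for $|F_0|\in\{4,5\}$: the bare matching count is not tight, so the argument genuinely rests on the finite but fussy structural analysis of small vertex sets in $CQ_4$ --- above all the fact that $CQ_4$ cannot be split into two parts, each of size at least three, by deleting six or fewer vertices, together with a handful of similar statements already implicit in Lemmas \ref{2n-3} and \ref{3n-6}. As $CQ_4$ has only $16$ vertices, each such fact is a bounded computation with the adjacency rules of Definition \ref{def}, and assembling them with the decomposition above yields the seven-case classification. As a consistency check on the threshold, Lemma \ref{A} with $n=5$ exhibits a $P_3$ whose neighbourhood already has $4n-9=11$ vertices and whose deletion leaves the rest connected --- case (3) at $|F|=11$ --- so $|F|=10$ is the last fault level at which so clean a classification can hold.
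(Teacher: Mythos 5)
Note first that the paper does not prove Lemma \ref{n5f10} at all: it is quoted from \cite{wm2}, so there is no in-paper proof to compare with, and your proposal has to stand on its own. Your reduction is fine: since $CQ_5$ is triangle-free, the seven alternatives are exactly equivalent to the statement that whenever $CQ_5-F$ is disconnected its largest component has at least $22-3=19$ vertices, and the decomposition-plus-matching count correctly settles the cases $|F_0|\le 3$ and $|F_0|=|F_1|=5$.

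The gap is in what you call the crux. Your claim that ``$CQ_4$ cannot be disconnected by six or fewer vertices into two parts both of size at least three'' is false, and this very paper records the counterexample: Lemma \ref{n4f6}, condition (5), allows $CQ_4-F$ with $|F|=6$ to split into two components of order $5$, and the example after Theorem \ref{cq4} exhibits such an $F=\{0100,0111,0011,1000,1110,1011\}$ explicitly (this is precisely why $\tilde{\kappa}^{(3)}(CQ_4)\le 6$, not $7$). So there is no ``$n=4$ analogue of Lemma \ref{3n-6}'' of the form you invoke, and your argument collapses exactly in the critical case $|F_0|=4$, $|F_1|=6$: there $CQ_5^1-F_1$ may a priori consist of two order-$5$ components, the set $B$ may be one of them, and the matching bound only gives $|B|\le |F_0|+1=5$, so nothing prevents (on your argument) a component of order $5$ or $6$ outside $D$, which would falsify the lemma. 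To close this case one must show, using the actual cross-edge rules (Proposition \ref{match} together with Definition \ref{def}) and the structure of the order-$5$ components produced by Lemma \ref{n4f6}(5), that the five cross-neighbours of such a component cannot all lie in $F_0\cup\{u\}$, where $u$ is the possible isolated vertex of $CQ_5^0-F_0$ (note $|F_0|=4=\kappa(CQ_4)$ forces $F_0=N_{CQ_5^0}(u)$ in that subcase, so the five cross-neighbours would have to be exactly a closed neighbourhood $\{u\}\cup N(u)$ in $CQ_5^0$). That is a genuine additional finite analysis which your proposal neither performs nor reduces to a true cited fact, so the proof as written does not go through.
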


\begin{lemma}\label{4n-10} Let $CQ_{n}$ be the crossed cube and let $F\subseteq V(CQ_{n})$ $(n\geq 5)$. If $3n-5\leq|F|\leq4n-10$, then $CQ_{n}-F$ satisfies one of the following conditions:
\\(1) $CQ_{n}-F$ is connected;
\\(2) $CQ_{n}-F$ has two components, one of which is a $K_{2}$;
\\(3) $CQ_{n}-F$ has two components, one of which is a 2-path;
\\(4) $CQ_{n}-F$ has two components, one of which is an isolated vertex;
\\(5) $CQ_{n}-F$ has three components, two of which are isolated vertices;
\\(6) $CQ_{n}-F$ has four components, three of which are isolated vertices;
\\(7) $CQ_{n}-F$ has three components, one of which is an isolated vertex and the other is a $K_{2}$.
\end{lemma}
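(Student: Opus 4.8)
The plan is to argue by induction on $n$, using the recursive structure $CQ_n=CQ_n^0\oplus CQ_n^1$ obtained by splitting along dimension $n-1$, so that $CQ_n^0\cong CQ_n^1\cong CQ_{n-1}$ and the $2^{n-1}$ cross edges form a perfect matching (Proposition~\ref{match}). The base case $n=5$ is exactly Lemma~\ref{n5f10}, since there $3n-5=4n-10=10$. For the inductive step ($n\ge 6$) set $F_i=F\cap V(CQ_n^i)$ and assume without loss of generality that $|F_0|\ge|F_1|$, so $|F_1|\le\lfloor|F|/2\rfloor\le 2n-5$. In every case the goal is the same: to exhibit one ``big'' component $B$ of $CQ_n-F$ such that $V(CQ_n-F)\setminus V(B)$ has at most $3$ vertices. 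Since $CQ_n$ has girth $4$, a set of at most $3$ vertices induces one of $K_1$, $2K_1$, $3K_1$, $K_2$, $K_2\cup K_1$, $P_3$, and these six possibilities correspond precisely to conclusions (4), (5), (6), (2), (7), (3) (with the empty case giving (1)); hence this bound suffices.

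First I would dispose of the case $|F_1|\le 3$, which in particular covers all $F_0$ with $|F_0|\ge 4n-13$ (for which the inductive hypothesis does not apply, so that no control on $CQ_{n-1}$ with a very large fault set is needed). Here $|F_1|\le 3\le n-2$, so $CQ_n^1-F_1$ is connected by Lemma~\ref{k}. As the cross edges form a perfect matching, exactly $|F_1|$ vertices of $CQ_n^0$ have their cross neighbour in $F_1$; let $Y$ be this set, so $|Y|\le 3$. Every vertex of $CQ_n^0\setminus(F_0\cup Y)$ is joined by its cross edge to the connected graph $CQ_n^1-F_1$, so all such vertices together with $V(CQ_n^1)\setminus F_1$ lie in one component $B$ of $CQ_n-F$; therefore $V(CQ_n-F)\setminus V(B)\subseteq Y\setminus F_0$ has at most $3$ vertices, as desired.

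For the remaining case $4\le|F_1|\le 2n-5$ one has $|F_0|=|F|-|F_1|\le 4n-14$, so the inductive hypothesis is available for $F_0$ once $|F_0|\ge 3n-8$. If $4\le|F_1|\le n-2$, then $CQ_n^1-F_1$ is connected and $2n-3\le|F_0|\le 4n-14$, so by Lemma~\ref{3n-6} (for $|F_0|\le 3n-9$) or the inductive hypothesis (for $|F_0|\ge 3n-8$), $CQ_n^0-F_0$ is a large component together with at most $3$ extra vertices; the large component, of order at least $2^{n-1}-(4n-14)-3$ and hence greater than $|F_1|$, and every small component of order exceeding $|F_1|$, are attached to $CQ_n^1-F_1$ by cross edges, so at most $3$ vertices remain outside $B$. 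If instead $n-1\le|F_1|\le 2n-5$, then $n-1\le|F_0|\le 3n-9$ as well, and Lemmas~\ref{2n-3} and~\ref{3n-6} give for each $i$ a large component $B_i$ of $CQ_n^i-F_i$ with a leftover set $S_i$ satisfying $|S_1|\le 1$ and $|S_0|\le 2$; as $|B_0|$ and $|B_1|$ exceed the relevant linear bounds, $B_0\cup B_1$ lies in one component $B$, so $V(CQ_n-F)\setminus V(B)\subseteq S_0\cup S_1$ again has at most $3$ vertices. In each case, reading off the induced subgraph on these $\le 3$ vertices yields one of (1)--(7), closing the induction.

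I expect the main obstacle to be the bookkeeping in the last step: having described $CQ_n^0-F_0$ and $CQ_n^1-F_1$ via Lemmas~\ref{2n-3},~\ref{3n-6} or the inductive hypothesis, one must determine precisely which small components of either side are absorbed into $B$ along their cross edges and check that the at most $3$ surviving vertices always form one of the admissible configurations -- this is where the triangle-freeness of $CQ_n$ is used. Verifying that $2^{n-1}$ dominates the linear quantities involved already at $n=6$, and the remaining subcase distinctions, are then routine, relying only on Lemmas~\ref{k},~\ref{2n-3},~\ref{3n-6},~\ref{n5f10} and on counting cross edges.
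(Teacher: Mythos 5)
Your proposal is correct and follows essentially the same route as the paper's proof: induction on $n$ with base case Lemma~\ref{n5f10}, decomposition along dimension $n-1$ into two copies of $CQ_{n-1}$, a case analysis on how $F$ splits using Lemmas~\ref{k}, \ref{2n-3}, \ref{3n-6} and the inductive hypothesis, and perfect-matching (cross-edge) counting to merge the large components, leaving at most three stranded vertices whose triangle-free induced subgraph gives conditions (1)--(7). The only differences are organizational (you condition on the smaller fault side and handle the unbalanced case via the set $Y$ of matched partners of $F_1$, where the paper conditions on the larger side), so the substance of the argument is the same.
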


\begin{proof} We prove the lemma by induction on $n$. By Lemma \ref{n5f10}, the lemma is true for $n=5$. We assume that the lemma is true for $n-1$, i.e., if $3n-8\leq|F|\leq4n-14$, then $CQ_{n-1}-F$ satisfies one of the conditions (1)-(7). Now we show that the lemma is also true for $n$ $(n\geq6)$. We can decompose $CQ_{n}$ along dimension $n-1$ into $CQ_{n}^{0}$ and $CQ_{n}^{1}$. Then both $CQ_{n}^{0}$ and $CQ_{n}^{1}$ are isomorphic to $CQ_{n-1}$. Let $F_{0}=F\cap V(CQ_{n}^{0})$ and $F_{1}=F\cap V(CQ_{n}^{1})$ with $|F_{0}|\leq |F_{1}|$. Let $B_{i}$ be the maximum component of $CQ_{n}^{i}-F_{i}$ (If $CQ_{n}^{i}-F_{i}$ is connected, then let $B_{i}=CQ_{n}^{i}-F_{i}$) for $i\in\{0,1\}$.
Since $3n-5\leq|F|\leq4n-10$, we have $0\leq|F_{0}|\leq\frac{4n-10}{2}=2n-5$ and $n\leq\lceil\frac{3n-5}{2}\rceil\leq|F_{1}|\leq4n-10$ $(n\geq6)$.
We consider the following cases.

{\it Case 1.} $n\leq|F_{1}|\leq2n-5$.

Note that $|F_{i}|\leq2n-5=2(n-1)-3$ for $i\in\{0,1\}$.
By Lemma \ref{2n-3}, $CQ_{n}^{i}-F_{i}$ is connected or has two components, one of which is an isolated vertex.
Since $2^{n-1}-(4n-10)-2\geq1$ $(n\geq6)$, by Proposition \ref{match}, $CQ_{n}[V(B_{0})\cup V(B_{1})]$ is connected. Thus, $CQ_{n}-F$ satisfies one of the conditions (1)-(7).

{\it Case 2.} $2n-4\leq|F_{1}|\leq3n-9$.

In this case, $|F_{0}|\leq4n-10-(2n-4)=2n-6<2n-5$. By Lemma \ref{2n-3}, $CQ_{n}^{0}-F_{0}$ is connected or has two components, one of which is an isolated vertex.
By Lemma \ref{3n-6}, $CQ_{n}^{1}-F_{1}$ satisfies one of the following conditions:
\\(a) $CQ_{n}^{1}-F_{1}$ has two components, one of which is a $K_{2}$;
\\(b) $CQ_{n}^{1}-F_{1}$ has two components, one of which is an isolated vertex;
\\(c) $CQ_{n}^{1}-F_{1}$ has three components, two of which are isolated vertices.

Since $2^{n-1}-(4n-10)-3\geq1$ $(n\geq6)$, by Proposition \ref{match}, $CQ_{n}[V(B_{0})\cup V(B_{1})]$ is connected. Thus, $CQ_{n}-F$ satisfies one of the conditions (1)-(7).

{\it Case 3.} $3n-8\leq|F_{1}|\leq4n-14$.

By the inductive hypothesis, $CQ_{n}^{1}-F_{1}$ satisfies one of the conditions (1)-(7).
In this case, $|F_{0}|\leq4n-10-(3n-8)=n-2$. By Lemma \ref{k}, $CQ_{n}^{0}-F_{0}$ is connected. Since $2^{n-1}-(4n-10)-3\geq1$ $(n\geq6)$, by Proposition \ref{match}, $CQ_{n}[V(B_{0})\cup V(B_{1})]$ is connected. Thus, $CQ_{n}-F$ satisfies one of the conditions (1)-(7).

{\it Case 4.} $4n-13\leq|F_{1}|\leq4n-10$.

In this case, $|F_{0}|\leq4n-10-(4n-13)=3$. By Lemma \ref{k}, $CQ_{n}^{0}-F_{0}$ is connected. Suppose that $CQ_{n}^{1}-F_{1}$ is connected. Since $2^{n-1}-(4n-10)\geq1$ $(n\geq6)$, by Proposition \ref{match}, $CQ_{n}[V(CQ_{n}^{0}-F_{0})\cup V(CQ_{n}^{1}-F_{1})]=CQ_{n}-F$ is connected. Then we suppose that $CQ_{n}^{1}-F_{1}$ is disconnected. Let the components of $CQ_{n}^{1}-F_{1}$ be $C_{1}$, $C_{2}$, \ldots, $C_{k}$ $(k\geq2)$. Note that $|F_{0}|\leq3$.
If every component $C_{i}$ of $CQ_{n}^{1}-F_{1}$ such that $|V(C_{i})|\geq4$ for $i\in\{1,\ldots,k\}$, then $CQ_{n}[V(CQ_{n}^{0}-F_{0})\cup V(C_{1})\cup\cdots\cup V(C_{k})]=CQ_{n}-F$ is connected. Suppose that there is a components $C_{i}$ such that $|V(C_{i})|\leq3$ for $i\in\{1,\ldots,k\}$. If $N_{CQ_{n}}(V(C_{i}))\cap V(CQ_{n}^{0})\subseteq F_{0}$, then $C_{i}$ is a component of $CQ_{n}-F$ with $|V(C_{i})|\leq3$. Combining $|F_{0}|\leq3$, we get that $CQ_{n}-F$ satisfies one of the conditions (1)-(7).
\end{proof}

%\begin{lemma}\label{b4n-9} The 3-extra connectivity $\tilde{\kappa}^{(3)}(CQ_{n})\geq4n-9$ for $n\geq5$.
%\end{lemma}
%\begin{proof}  Let $F$ be the minimum 3-extra cut of $CQ_{n}$. If $|F|\leq4n-10$, by Lemma \ref{4n-10}, then $F$ is not a 3-extra cut of $CQ_{n}$. Thus, we have $|F|\geq4n-9$. By the definition of 3-extra connectivity, $\tilde{\kappa}^{(3)}(CQ_{n})=|F|\geq4n-9$.
%\end{proof}

%Combining Lemmas \ref{a4n-9} and \ref{b4n-9}, we have the following theorem.

\begin{theorem}(\cite{Zhu})\label{4n-9} Let $CQ_{n}$ be the crossed cube. Then $\tilde{\kappa}^{(3)}(CQ_{n})=4n-9$ for $n\geq 5$.
\end{theorem}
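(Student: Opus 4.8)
The plan is to establish the equality $\tilde{\kappa}^{(3)}(CQ_n)=4n-9$ by proving the two inequalities $\tilde{\kappa}^{(3)}(CQ_n)\leq 4n-9$ and $\tilde{\kappa}^{(3)}(CQ_n)\geq 4n-9$ separately. The upper bound is essentially done already: in Lemma~\ref{A} we exhibited a set $A$ of four vertices inducing a $3$-path and showed that $F=N_{CQ_n}(A)$ has $|F|=4n-9$ and that $CQ_n-(A\cup F)$ is connected. Thus $F$ is a $3$-extra vertex cut (each component of $CQ_n-F$ has at least four vertices, since $|A|=4$ and $CQ_n-(A\cup F)$ is a connected graph on $2^n-4-(4n-9)$ vertices, which exceeds $4$ for $n\geq 5$), so $\tilde{\kappa}^{(3)}(CQ_n)\leq |F|=4n-9$.

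For the lower bound I would argue by contradiction: suppose $F$ is a $3$-extra vertex cut of $CQ_n$ with $|F|\leq 4n-10$. Then $3n-5\leq |F|\leq 4n-10$ cannot both fail in a way that helps, so I first need to handle the range $|F|\leq 3n-6$ using Lemmas~\ref{2n-3} and \ref{3n-6}, and the range $3n-5\leq|F|\leq 4n-10$ using Lemma~\ref{4n-10}. In each case the structural lemma forces $CQ_n-F$ to have a component that is an isolated vertex, a $K_2$, or a $2$-path — i.e. a component on at most three vertices. But a $3$-extra vertex cut by definition requires \emph{every} component of $CQ_n-F$ to have at least $4$ vertices, contradicting the existence of such a small component. (For $|F|<3n-5$ one also needs $\kappa(CQ_n)=n\leq|F|$ from Lemma~\ref{k} to know $CQ_n-F$ can even be disconnected, and then Lemmas~\ref{2n-3}, \ref{3n-6} directly give a component of size $\leq 2$.) Hence no $3$-extra vertex cut of size $\leq 4n-10$ exists, so $\tilde{\kappa}^{(3)}(CQ_n)\geq 4n-9$.

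Combining the two bounds yields $\tilde{\kappa}^{(3)}(CQ_n)=4n-9$ for $n\geq 5$. The only delicate point is that Lemma~\ref{4n-10} is stated for $n\geq 5$ and Lemma~\ref{3n-6} for $n\geq 5$, while Lemma~\ref{2n-3} is stated for $n\geq 3$; one must check that for $n=5,6$ the small sporadic ranges of $|F|$ are genuinely covered (for $n=5$, $3n-5=10=4n-10$, so Lemma~\ref{4n-10} reduces to the single value $|F|=10$ handled by Lemma~\ref{n5f10}, and the bands $|F|\leq 2n-3=7$ and $2n-2=8\leq|F|\leq 3n-6=9$ are covered by Lemmas~\ref{2n-3} and \ref{3n-6}, so the union of ranges is exactly $n\leq|F|\leq 4n-10$ with no gap). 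I expect the main obstacle to be nothing deep — it is bookkeeping to confirm that the union of the intervals in Lemmas~\ref{2n-3}, \ref{3n-6}, and \ref{4n-10} with the connectivity bound $\kappa(CQ_n)=n$ covers every value $|F|\leq 4n-10$ without a gap, and that in each listed structural outcome there is always a component of order at most $3$, which immediately violates the $3$-extra condition.
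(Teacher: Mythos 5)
Your proposal is correct, but note that the paper does not prove this statement at all: Theorem~\ref{4n-9} is quoted with a citation to Zhu et al.\ \cite{Zhu}, who determine the 3-extra connectivity for the whole class of BC networks (of which $CQ_n$ is a member). What you have done is reconstruct a self-contained proof from the machinery the paper itself develops: the upper bound $\tilde{\kappa}^{(3)}(CQ_n)\leq 4n-9$ from Lemma~\ref{A} (the set $A$ inducing a 3-path, $|N_{CQ_n}(A)|=4n-9$, and $CQ_n-(A\cup N_{CQ_n}(A))$ connected with more than $4$ vertices), and the lower bound by ruling out any 3-extra cut of size at most $4n-10$ via Lemma~\ref{k} for $|F|<n$, Lemma~\ref{2n-3} for $n\leq|F|\leq 2n-3$, Lemma~\ref{3n-6} for $2n-2\leq|F|\leq 3n-6$, and Lemma~\ref{4n-10} for $3n-5\leq|F|\leq 4n-10$, since in every disconnected outcome listed there is a component of order at most $3$, contradicting the definition of a 3-extra cut. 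Your bookkeeping on the ranges is right (they are contiguous for $n\geq 5$, with the degenerate case $3n-5=4n-10$ at $n=5$ handled by Lemma~\ref{n5f10}), and the argument is not circular, since none of Lemmas~\ref{A}, \ref{2n-3}, \ref{3n-6}, \ref{4n-10} depend on Theorem~\ref{4n-9}. The trade-off is the expected one: the paper's citation buys brevity and inherits a more general result about BC networks, while your derivation makes the crossed-cube case self-contained (modulo the structural lemmas imported from \cite{wm1,wm2}), and in fact mirrors the style of argument the paper itself uses later in Theorem~\ref{tcq7}.
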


%\begin{theorem}(\cite{Zhu})\label{4n-9} Let $CQ_{n}$ be the crossed cube. Then $\tilde{\kappa}^{(g)}(CQ_{n})=n(g + 1) -\frac{1}{2}
%g(g + 3)$ for $n\geq 4$ and $0\leq g\leq n-4$.
%\end{theorem}

\begin{lemma}(\cite{wm2})\label{n4f6} Let $F\subseteq V(CQ_{4})$. If $|F|=6$, then $CQ_{4}-F$ satisfies one of the following conditions:
\\(1) $CQ_{4}-F$ is connected;
\\(2) $CQ_{4}-F$ has two components, one of which is a $K_{2}$;
\\(3) $CQ_{4}-F$ has two components, one of which is an isolated vertex;
\\(4) $CQ_{4}-F$ has three components, two of which are isolated vertices;
\\(5) $CQ_{4}-F$ has two components, which are two components of order 5.
\end{lemma}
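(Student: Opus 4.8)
The plan is to decompose $CQ_{4}$ along its top dimension into $CQ_{4}^{0}$ and $CQ_{4}^{1}$, each isomorphic to $CQ_{3}$, joined by a perfect matching of cross edges (Proposition~\ref{match}). Put $F_{0}=F\cap V(CQ_{4}^{0})$ and $F_{1}=F\cap V(CQ_{4}^{1})$, and assume $|F_{0}|\le|F_{1}|$, so that $(|F_{0}|,|F_{1}|)$ is one of $(0,6),(1,5),(2,4),(3,3)$. Two facts drive everything. First, by Lemma~\ref{k} we have $\kappa(CQ_{3})=3$, so $CQ_{4}^{0}-F_{0}$ is connected whenever $|F_{0}|\le 2$, and by Lemma~\ref{2n-3} (with $n=3$) if $|F_{0}|=3$ then $CQ_{4}^{0}-F_{0}$ is either connected or is an isolated vertex together with a connected $4$-vertex component. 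Second, since the cross edges form a perfect matching, a connected piece $C$ of $CQ_{4}^{i}-F_{i}$ fails to attach to the rest only if every cross edge leaving $C$ lands in $F_{1-i}$, which forces $|V(C)|\le|F_{1-i}|$; in fact the smaller side can absorb at most $|F_{0}|$ stranded vertices in total. Once the pieces of each side are controlled, re-gluing along the surviving cross edges pins down the component structure of $CQ_{4}-F$.

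For the three unbalanced splits the smaller side has $|F_{0}|\le 2$, hence $CQ_{4}^{0}-F_{0}$ is connected with at least six vertices; any component of $CQ_{4}^{1}-F_{1}$ that does not merge with it into one large component has at most $|F_{0}|\le 2$ vertices, and altogether at most $|F_{0}|$ vertices can be stranded. When $(|F_{0}|,|F_{1}|)=(0,6)$ nothing is stranded, so $CQ_{4}-F$ is connected; when it is $(1,5)$ at most one isolated vertex can be stranded, giving (1) or (3); when it is $(2,4)$ the stranded part is empty, a single isolated vertex, a single $K_{2}$, or two isolated vertices, giving (1)--(4). This part uses no finer information about $CQ_{3}$ than its connectivity, and in particular produces no $2$-path component (stranding a $3$-path would need $|F_{0}|\ge 3$).

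The crux is the balanced split $|F_{0}|=|F_{1}|=3$, where both $CQ_{4}^{0}-F_{0}$ and $CQ_{4}^{1}-F_{1}$ have five vertices and, by Lemma~\ref{2n-3}, each is either connected or an isolated vertex plus a connected $4$-set. If both sides are connected, then at least $5-3=2$ cross edges survive between them and $CQ_{4}-F$ is connected. If exactly one side, say $CQ_{4}^{1}-F_{1}$, splits as $\{x'\}\cup Y'$, then $Y'$ cannot be stranded since $|F_{0}|=3<4$, so $Y'$ merges with the connected five-vertex side into a component of order at least nine, while $x'$ is attached to it or is an isolated vertex, yielding (1) or (3). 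The genuinely delicate case is when both sides split, as $\{x\}\cup Y$ and $\{x'\}\cup Y'$ with $|Y|=|Y'|=4$; here one tracks the perfect matching. If some cross edge joins $Y$ to $Y'$, then $Y\cup Y'$ is a connected core of order eight, and a short check on where the matching sends $x$ and $x'$ shows the outcome is one of (1)--(4). If no cross edge joins $Y$ to $Y'$, then the four cross-partners of $Y$ are exactly the four vertices of the closed neighbourhood of $x'$ (which equals $\{x'\}\cup F_{1}$), which forces $x'$ to be matched into $Y$ and $x$ into $Y'$; then $\{x'\}\cup Y$ and $\{x\}\cup Y'$ are each connected, of order five, and have no edge to each other, so $CQ_{4}-F$ is exactly these two components --- condition (5). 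I expect this last configuration to be the main obstacle: it is a ``diagonal'' split, invisible to any single-dimension decomposition, so it cannot be dismissed by degree or isoperimetric counting and must be handled through the matching structure as above. Collecting the four splits then shows (1)--(5) exhaust the possibilities.
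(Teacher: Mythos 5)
Your proposal is correct, but note that for this particular statement the paper offers no proof at all: Lemma \ref{n4f6} is imported verbatim from \cite{wm2}, so there is no in-paper argument to compare against. Judged on its own, your argument is complete: the split $(|F_{0}|,|F_{1}|)\in\{(0,6),(1,5),(2,4),(3,3)\}$ is exhaustive, the use of Lemma \ref{k} (so $CQ_{4}^{0}-F_{0}$ is connected for $|F_{0}|\le 2$) and of Lemma \ref{2n-3} with $n=3$, $|F_{0}|=3$ (connected, or an isolated vertex plus a connected $4$-set) is legitimate, and the perfect-matching count correctly bounds the stranded vertices by $|F_{0}|$, which rules out any small component other than isolated vertices or a $K_{2}$ in the unbalanced cases. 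The balanced case is handled exactly where it needs care: when both halves split as $\{x\}\cup Y$ and $\{x'\}\cup Y'$ and no cross edge joins $Y$ to $Y'$, the matching forces the partners of $Y$ to be precisely $F_{1}\cup\{x'\}$ (and symmetrically for $Y'$), so $x'$ is matched into $Y$ and $x$ into $Y'$, producing the two order-$5$ components of condition (5); this is consistent with the explicit example the paper gives after Theorem \ref{cq4}. Methodologically your proof is in the same spirit as the paper's own inductive arguments (Lemma \ref{4n-10}, Theorem \ref{tcq7}): decompose along the top dimension, control each half by Lemmas \ref{k} and \ref{2n-3}, and reglue via Proposition \ref{match}; the added value of your write-up is that it makes the base case self-contained instead of outsourcing it to \cite{wm2}, and it explains structurally why the exceptional ``two components of order 5'' outcome can only arise from the balanced diagonal configuration.
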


\begin{theorem}\label{cq4} Let $CQ_{4}$ be the crossed cube. Then the 3-extra connectivity of $CQ_{4}$ is not 7.
\end{theorem}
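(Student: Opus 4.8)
The plan is to exhibit a $3$-extra vertex cut of $CQ_4$ of size $6$; since $\tilde{\kappa}^{(3)}(CQ_4)$ is then $\le 6$, it cannot equal $7$. For orientation, a $3$-extra vertex cut of size $7$ does exist (namely $N_{CQ_4}(A)$ for the $3$-path $A=\{0000,0100,0110,0111\}$ of Lemma \ref{A} with $n=4$, which separates $A$ from its connected order-$5$ complement), so the content of the theorem is precisely that $CQ_4$ can be separated more cheaply than the value $4n-9=7$, i.e. that the hypothesis $n\ge 5$ in Theorem \ref{4n-9} cannot be relaxed to include $n=4$.

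Using the vertex labels of $CQ_4$ as in Fig.~1, I will take
$$A=\{0000,\,0100,\,0110,\,0111,\,1100\},\qquad B=\{0011,\,1001,\,1010,\,1011,\,1111\},$$
and $F=V(CQ_4)\setminus(A\cup B)=\{0001,\,0010,\,0101,\,1000,\,1101,\,1110\}$, so $|A|=|B|=5$ and $|F|=6$. Here $A$ is the $3$-path $0000\,0100\,0110\,0111$ of Lemma \ref{A} with the extra vertex $1100$ adjoined; among the seven neighbours of that $3$-path, $1100$ is the unique one having no neighbour among the remaining five vertices, which is exactly why adjoining it shrinks the outer boundary from $7$ down to $6$.

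The remainder is a finite, routine verification straight from Definition \ref{def} (equivalently, from the adjacencies visible in Fig.~1): (i) $CQ_4[A]$ is connected, being the tree with edges $\{0000,0100\}$, $\{0100,0110\}$, $\{0100,1100\}$, $\{0110,0111\}$; (ii) $CQ_4[B]$ is connected, being the tree with edges $\{0011,1001\}$, $\{1001,1011\}$, $\{1001,1111\}$, $\{1010,1011\}$; and (iii) $N_{CQ_4}(A)=F$, so no edge of $CQ_4$ joins $A$ to $B$. (Alternatively, once $|F|=6$ one may invoke Lemma \ref{n4f6} and observe that the explicit description of $F$ forces $CQ_4-F$ into case (5) there.) It follows that $CQ_4-F$ has exactly the two components $CQ_4[A]$ and $CQ_4[B]$, each of order $5\ge 3+1$; hence $F$ is a $3$-extra vertex cut of $CQ_4$, so $\tilde{\kappa}^{(3)}(CQ_4)\le |F|=6$, and in particular $\tilde{\kappa}^{(3)}(CQ_4)\ne 7$.

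The one genuinely non-mechanical step is locating the correct $6$-element set $F$; after that, the argument is a short, case-free check on a $16$-vertex graph, so I do not anticipate any real obstacle — the statement is essentially a small-case exception to the $4n-9$ pattern rather than a deep result.
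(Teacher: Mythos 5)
Your proposal is correct and takes essentially the same approach as the paper: the paper also establishes Theorem \ref{cq4} by exhibiting a $6$-vertex $3$-extra cut of $CQ_4$ whose removal leaves two components of order $5$ (its example is $F=\{0100,0111,0011,1000,1110,1011\}$), with Lemma \ref{n4f6} playing exactly the auxiliary role you mention. Your particular cut $\{0001,0010,0101,1000,1101,1110\}$, with parts $\{0000,0100,0110,0111,1100\}$ and $\{0011,1001,1010,1011,1111\}$, does verify against the adjacency rules of Definition \ref{def}, so the argument is sound.
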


\begin{proof} Let $F$ be the minimum 3-extra cut of $CQ_{4}$. If $|F|=6$, by Lemma \ref{n4f6}, then $CQ_{4}-F$ satisfies one of the conditions (1)-(5) in Lemma \ref{n4f6}. If $CQ_{4}-F$ satisfies the condition (5), then $F$ is a 3-extra cut. By the definition of 3-extra connectivity, $\tilde{\kappa}^{(3)}(CQ_{4})\leq|F|=6$. Thus, the 3-extra connectivity of $CQ_{4}$ is not 7.
\end{proof}

We give an example such that the 3-extra connectivity of $CQ_{4}$ is not 7.

In $CQ_{4}$ (see Fig.1), let $F=\{0100,0111,0011,1000,1110,1011\}$. Then $CQ_{4}-F$  has two components $A$ and $B$, where $V(A)=\{0001,0000,0010,0110,\\1010\}$ and $V(B)=\{0101,1111,1001,1101,1100\}$.
It is easy to see that $|F|=6$ and $|V(A)|=|V(B)|=5$.
Thus, $F$ is a 3-extra cut of $CQ_{4}$. By the definition of 3-extra connectivity, $\tilde{\kappa}^{(3)}(CQ_{4})\leq |F|=6$. In other words, the 3-extra connectivity of $CQ_{4}$ is not 7.

\begin{lemma}(\cite{wm2})\label{n6f4n-9} Let $F\subseteq V(CQ_{n})$ $(n\geq 6)$. If $3n-4\leq|F|\leq4n-9$, then $CQ_{n}-F$ satisfies one of the following conditions:
\\(1) $CQ_{n}-F$ is connected;
\\(2) $CQ_{n}-F$ has two components, one of which is a $K_{2}$;
\\(3) $CQ_{n}-F$ has two components, one of which is a $K_{1,3}$;
\\(4) $CQ_{n}-F$ has two components, one of which is a 2-path;
\\(5) $CQ_{n}-F$ has two components, one of which is a 3-path;
\\(6) $CQ_{n}-F$ has two components, one of which is an isolated vertex;
\\(7) $CQ_{n}-F$ has three components, two of which are isolated vertices;
\\(8) $CQ_{n}-F$ has four components, three of which are isolated vertices;
\\(9) $CQ_{n}-F$ has three components, one of which is an isolated vertex and the other is a $K_{2}$;
\\(10) $CQ_{n}-F$ has three components, one of which is an isolated vertex and the other is a 2-path.
\end{lemma}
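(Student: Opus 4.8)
The natural route is induction on $n$, exploiting the decomposition $CQ_n=CQ_n^0\cup CQ_n^1$ with $CQ_n^0\cong CQ_n^1\cong CQ_{n-1}$ joined by the perfect matching of Proposition \ref{match}. Throughout write $F_i=F\cap V(CQ_n^i)$, assume without loss of generality that $|F_0|\le|F_1|$ (so $|F_0|\le\lfloor(4n-9)/2\rfloor=2n-5$ and $|F_1|\ge\lceil(3n-4)/2\rceil\ge n-1$), and let $B_i$ denote the largest component of $CQ_n^i-F_i$. For the base case $n=6$, the subrange $|F|=14$ already falls under Lemma \ref{4n-10}, whose outcomes (1)--(7) are among (1)--(10), so only $|F|=15$ needs attention; for that one decomposes as above with $|F_0|\le 7\le|F_1|$ and applies the $CQ_5$-results — Lemma \ref{2n-3} when a half carries at most $7$ faults, Lemma \ref{3n-6} for $8$ or $9$, Lemma \ref{n5f10} for exactly $10$, and the direct argument below when $|F_1|\ge 11$ — then glues the halves and inspects the leftover.

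For the inductive step ($n\ge 7$), assume the lemma for $n-1$ and split by the size of $|F_1|$, applying to $CQ_n^1\cong CQ_{n-1}$: Lemma \ref{2n-3} if $|F_1|\le 2n-5=2(n-1)-3$; Lemma \ref{3n-6} if $2n-4\le|F_1|\le 3n-9=3(n-1)-6$; Lemma \ref{4n-10} if $3n-8\le|F_1|\le 4n-14=4(n-1)-10$; and the induction hypothesis if $3n-7\le|F_1|\le 4n-13=4(n-1)-9$. These four ranges are contiguous and cover $[n-1,4n-13]$. In the first two, $|F_0|\le 2n-5=2(n-1)-3$, so Lemma \ref{2n-3} also applies to $CQ_n^0$; in the last two, $|F_0|=|F|-|F_1|\le n-1$, so $CQ_n^0-F_0$ is connected (Lemma \ref{k}, if $|F_0|\le n-2$) or has a single isolated vertex (Lemma \ref{2n-3}, if $|F_0|=n-1$). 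In every case $CQ_n^i-F_i$ consists of a component $B_i$ of order at least $2^{n-1}-(4n-9)-4$ together with at most an additional $K_2$, $2$-path, $3$-path, or $K_{1,3}$. Now glue: the matching has $2^{n-1}$ cross edges, and a cross edge fails to join $B_0$ to $B_1$ only when one of its ends lies outside $B_0\cup B_1$; the number of such vertices is at most $|F|+5\le 4n-4<2^{n-1}$ for $n\ge 6$, so some cross edge survives and $B_0\cup B_1$ lies in one component of $CQ_n-F$. What remains outside that component are the small pieces of $CQ_n^0-F_0$ and $CQ_n^1-F_1$, with the only extra possibility that one small piece from each half merges across a cross edge into a path one unit longer; a finite check of these configurations against (1)--(10) closes the step.

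The genuinely delicate range is $4n-12\le|F_1|\le 4n-9$, where no earlier component lemma applies to $CQ_n^1$; but there $|F_0|=|F|-|F_1|\le 3<n-1=\kappa(CQ_{n-1})$, so $CQ_n^0-F_0$ is connected. If $CQ_n^1-F_1$ has two components of order $\ge 4$, each reaches $CQ_n^0-F_0$ through a cross edge (only the $\le 3$ cross edges landing in $F_0$ can be blocked), so $CQ_n-F$ is connected. Otherwise every component of $CQ_n^1-F_1$ other than $B_1$ has order $\le 3$; for such a component $C$ we have $N_{CQ_{n-1}}(V(C))\subseteq F_1$, and a lower bound on $|N_{CQ_{n-1}}(V(C))|$ that is linear in $|V(C)|$ with slope $n-1$ (read off from Definition \ref{def}), combined with $|F_1|\le 4n-9$, limits the total leftover inside $CQ_n^1$ to at most three isolated vertices, or one $K_2$ and one isolated vertex, or one $2$- or $3$-path; adding $|F_0|\le 3$ and enumerating the components of $CQ_n-F$ again yields one of (1)--(10).

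The main obstacle is precisely this last range: absent a ready-made structural lemma for $CQ_{n-1}$ when $4n-12\le|F_1|\le 4n-9$, one must establish sharp lower bounds on $|N_{CQ_{n-1}}(S)|$ for connected vertex sets $S$ of order up to $3$ directly from the adjacency rule of Definition \ref{def}, and use them together with the cross matching to exclude every leftover configuration not appearing in the list (1)--(10). The case $|F|=15$ in $CQ_6$ carries the same difficulty in miniature and is dispatched by the same counting.
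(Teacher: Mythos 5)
You are proving a statement the paper itself does not prove (it is quoted from \cite{wm2}), so the only comparison available is with the analogous inductive arguments the paper does give (Lemma \ref{4n-10}, Theorem \ref{tcq7}); your decompose-along-dimension-$(n-1)$, case-on-$|F_1|$, glue-by-the-matching skeleton is the same style and is the right frame. But the proposal has a genuine gap, and it also locates the difficulty in the wrong place. The range you call ``genuinely delicate,'' $4n-12\le|F_1|\le 4n-9$, needs no neighbourhood estimates at all: there $|F_0|\le 3$, $CQ_n^0-F_0$ is connected, and since the cross edges form a perfect matching (Proposition \ref{match}) a component of $CQ_n^1-F_1$ misses the big component only if all its cross neighbours lie in $F_0$, so the stranded vertices total at most $3$ and one lands in (1),(2),(4),(6)--(10) directly --- exactly Case 4 of Lemma \ref{4n-10} and of Theorem \ref{tcq7}. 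The real work is the step you dismiss as ``a finite check of these configurations.'' In the range where $|F_0|$ can reach $n-1$ while Lemma \ref{4n-10} governs the other side (forced corner: $|F_0|=n-1$, $F_0=N_{CQ_n^0}(u)$, $|F_1|=3n-8$, $|F|=4n-9$), the naive outcomes include an isolated vertex $u$ of $CQ_n^0-F_0$ stranded simultaneously with two or three isolated vertices, or with an isolated vertex plus a $K_2$, of $CQ_n^1-F_1$, giving four or five components whose pattern is not among (7)--(10); and $u$ can merge across its cross edge with an order-$4$ piece to give an order-$5$ component, which is on no line of the list. The lemma is true only because each such configuration forces $|F|>4n-9$, and showing that requires explicit lower bounds on $|N_{CQ_{n-1}}(S)|$ for connected $S$ with $|S|\le 4$, bounds on common neighbours, and a count that also charges the cross neighbours of the stranded pieces against the specific set $F_0=N_{CQ_n^0}(u)$. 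Your proposal never establishes any of these; the one quantitative claim you do make (``slope $n-1$'') is unsubstantiated and, as stated, not even correct for $|S|\ge 2$.

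The same omission undermines the base case. For $n=6$, $|F|=15$, $|F_1|=11$, $|F_0|=4$, an order-$4$ component of $CQ_6^1-F_1$ can have all four cross neighbours inside $F_0$ and so survive as a component of $CQ_6-F$; a priori it could be a $4$-cycle, which appears nowhere in (1)--(10), and excluding it (as well as the mixed stranded configurations above in miniature) again needs neighbourhood estimates in $CQ_5$ that you only gesture at. A smaller inaccuracy: your summary ``$B_i$ together with at most an additional $K_2$, $2$-path, $3$-path, or $K_{1,3}$'' forgets the multi-piece outcomes (two or three isolated vertices, isolated vertex plus $K_2$ or $2$-path) delivered by Lemmas \ref{3n-6}, \ref{4n-10} and the induction hypothesis, and it is precisely those multi-piece outcomes, combined with an isolated vertex on the other side, that create the problematic cases. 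So the structure of the argument is plausible, but the counting that separates the admissible list (1)--(10) from the inadmissible configurations --- the heart of the cited proof --- is missing.
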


\begin{theorem}\label{tcq7} For $n\geq 7$, the crossed cube $CQ_{n}$ is tightly $(4n-9)$ super 3-extra connected.
\end{theorem}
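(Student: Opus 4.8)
The plan is to establish two things: first, that $\tilde{\kappa}^{(3)}(CQ_n) = 4n-9$ is attained only by cuts that isolate a connected subgraph of order $4$ (the \emph{super} part), and second, that for any such minimum $3$-extra cut $F$, the graph $CQ_n - F$ has exactly two components, one of which is that order-$4$ subgraph (the \emph{tightly} part). By Theorem~\ref{4n-9} we already know $\tilde{\kappa}^{(3)}(CQ_n) = 4n-9$ for $n\geq 5$, and Lemma~\ref{A} already exhibits a concrete minimum $3$-extra cut $F = N_{CQ_n}(A)$ with $|F| = 4n-9$ for which $CQ_n - (A\cup F)$ is connected, so at least one minimum $3$-extra cut realizes the ``tightly super'' configuration. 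The work is to show \emph{every} minimum $3$-extra cut does.

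First I would let $F$ be an arbitrary minimum $3$-extra cut, so $|F| = 4n-9$, every component of $CQ_n - F$ has at least $4$ vertices, and $CQ_n - F$ is disconnected. Since $3n-4 \leq 4n-9 \leq 4n-9$ for $n\geq 5$, Lemma~\ref{n6f4n-9} applies (using $n\geq 6$, hence certainly $n\geq 7$): $CQ_n - F$ satisfies one of conditions (1)--(10) of that lemma. Now I would eliminate all the configurations incompatible with the $3$-extra requirement: condition (1) is excluded because $CQ_n-F$ is disconnected; conditions (2),(3),(4),(5),(6),(7),(8),(9),(10) all produce a component that is a $K_2$, a $K_{1,3}$, a $2$-path, a $3$-path, or an isolated vertex --- every one of which has at most $4$ vertices, but more to the point, conditions (2),(4),(6),(7),(8),(9),(10) each force a component of order $\le 3$, violating the $3$-extra condition (each component must have at least $4$ vertices). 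Condition (3) leaves a $K_{1,3}$ (order $4$) and condition (5) leaves a $3$-path (order $4$); for these I must check whether the complementary component has order $\geq 4$. Since $n\geq 7$, $2^n - (4n-9) - 4 \geq 1$, so the complement of $F$ and the small component is nonempty and, because only two components are asserted in (3) and (5), that complement \emph{is} a single component of order $\geq 4$. Hence the only surviving possibilities are (3) and (5): $CQ_n - F$ has exactly two components, one of which is a $K_{1,3}$ or a $3$-path --- in both cases a connected subgraph of order $4 = 3+1$. This is precisely ``tightly $(4n-9)$ super $3$-extra connected.''

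The step I expect to be the main obstacle is the bookkeeping needed to confirm that in cases (3) and (5) the large side is genuinely a single component of order at least $4$ rather than something that would itself have to be broken up further --- i.e., making rigorous the claim that ``two components, one of which is a $K_{1,3}$/$3$-path'' together with $|V(CQ_n)| - |F| - 4$ large forces the second component to be the order-$\geq 4$ piece; this is immediate from Lemma~\ref{n6f4n-9} asserting \emph{exactly} two components, so the difficulty is really just in citing the counting inequality $2^n - (4n-9) - 4 \geq 1$ correctly for $n\geq 7$. A secondary point requiring care is verifying that conditions (2),(4),(6)--(10) are each incompatible with the $3$-extra hypothesis by a clean order count (smallest listed component has $\le 3$ vertices), so that no residual case analysis inside the structure of $CQ_n$ is needed; the whole argument then reduces to reading off Lemma~\ref{n6f4n-9} and discarding the configurations that have a component too small to be $3$-extra.
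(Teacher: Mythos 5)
Your proposal is correct, but it takes a genuinely different route from the paper's. You apply Lemma~\ref{n6f4n-9} directly to $CQ_n$ with $|F|=4n-9$ (legitimate, since $3n-4\le 4n-9$ for $n\ge 5$ and that lemma only requires $n\ge 6$), then discard every listed configuration that contains a component of order at most $3$ --- which the $3$-extra property of $F$ forbids --- leaving only cases (3) and (5), each of which already asserts exactly two components, one being a $K_{1,3}$ or a $3$-path of order $4$; this is the ``tightly $(4n-9)$ super'' conclusion. The paper instead decomposes $CQ_n$ along dimension $n-1$ into $CQ_n^0$ and $CQ_n^1$, splits $F$ into $F_0,F_1$ with $|F_0|\le |F_1|$, and runs a four-case analysis on $|F_1|$, invoking Lemmas~\ref{k}, \ref{2n-3}, \ref{3n-6}, \ref{4n-10} and applying Lemma~\ref{n6f4n-9} only to the subcube $CQ_n^1\cong CQ_{n-1}$ (which is precisely why the paper needs $n\ge 7$, so that $n-1\ge 6$), together with cross-edge counting via Proposition~\ref{match} to glue the large pieces of the two halves. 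Your argument buys brevity and would in fact establish the statement already for $n\ge 6$; the paper's subcube decomposition adds no logical strength here beyond exhibiting where the order-$4$ component sits relative to the two halves. One small remark: the counting step you flag as the main obstacle, $2^n-(4n-9)-4\ge 1$, is not actually needed, since the cited lemma asserts exactly two components and the $3$-extra property of the minimum cut $F$ already forces the second component to have order at least $4$.
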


\begin{proof} Let $F$ be a minimum 3-extra cut of $CQ_{n}$. By Theorem \ref{4n-9}, $|F|=4n-9$.
We can decompose $CQ_{n}$ along dimension $n-1$ into $CQ_{n}^{0}$ and $CQ_{n}^{1}$. Then both $CQ_{n}^{0}$ and $CQ_{n}^{1}$ are  isomorphic to $CQ_{n-1}$. Let $F_{0}=F\cap V(CQ_{n}^{0})$ and $F_{1}=F\cap V(CQ_{n}^{1})$ with $|F_{0}|\leq |F_{1}|$. Then $|F_{1}|\geq\lceil\frac{4n-9}{2}\rceil=2n-4$.
We consider the following cases.

{\it Case 1.}  $2n-4\leq|F_{1}|\leq3n-9$.

In this case, $|F_{0}|\leq4n-9-(2n-4)=2n-5$. By Lemma \ref{2n-3}, $CQ_{n}^{0}-F_{0}$ is connected or has two components, one of which is an isolated vertex.
By Lemma \ref{3n-6}, $CQ_{n}^{1}-F_{1}$ satisfies one of the following conditions:
\\(1) $CQ_{n}^{1}-F_{1}$ is connected;
\\(2) $CQ_{n}^{1}-F_{1}$ has two components, one of which is a $K_{2}$;
\\(3) $CQ_{n}^{1}-F_{1}$ has two components, one of which is an isolated vertex;
\\(4) $CQ_{n}^{1}-F_{1}$ has three components, two of which are isolated vertices.

Let $B_{i}$ be the maximum component of $CQ_{n}^{i}-F_{i}$ for $i\in\{0,1\}$ (if $CQ_{n}^{i}-F_{i}$ is connected, then $B_{i}=CQ_{n}^{i}-F_{i}$).
Since $2^{n-1}-(4n-9)-3\geq1$ $(n\geq7)$, by Proposition \ref{match}, $CQ_{n}[V(B_{0})\cup V(B_{1})]$ is connected. Thus, $F$ is not a 3-extra cut of $CQ_{n}$. This is a contradiction to that $F$ is a minimum 3-extra cut of $CQ_{n}$.

{\it Case 2.} $3n-8\leq|F_{1}|\leq4n-14$.

In this case, $|F_{0}|=4n-9-(3n-8)=n-1$. By Lemma \ref{2n-3}, $CQ_{n}^{0}-F_{0}$ is connected or has two components, one of which is an isolated vertex. Note that $3(n-1)-5=3n-8\leq|F_{1}|\leq4n-14=4(n-1)-10$.
By Lemma \ref{4n-10}, $CQ_{n}^{1}-F_{1}$ satisfies one of the following conditions:
\\(1) $CQ_{n}^{1}-F_{1}$ is connected;
\\(2) $CQ_{n}^{1}-F_{1}$ has two components, one of which is a $K_{2}$;
\\(3) $CQ_{n}^{1}-F_{1}$ has two components, one of which is a 2-path;
\\(4) $CQ_{n}^{1}-F_{1}$ has two components, one of which is an isolated vertex;
\\(5) $CQ_{n}^{1}-F_{1}$ has three components, two of which are isolated vertices;
\\(6) $CQ_{n}^{1}-F_{1}$ has four components, three of which are isolated vertices;
\\(7) $CQ_{n}^{1}-F_{1}$ has three components, one of which is an isolated vertex and the other is a $K_{2}$.

If $CQ_{n}^{0}-F_{0}$ is connected, by Proposition \ref{match}, then $F$ is not a 3-extra cut of $CQ_{n}$.
We suppose that $CQ_{n}^{0}-F_{0}$ is disconnected. Let $u$ be the isolated vertex and $B_{0}$ be the other component in $CQ_{n}^{0}-F_{0}$. If $CQ_{n}^{1}-F_{1}$ satisfies the condition (3), then let $P$ be the 2-path and $B_{1}$ be the other component in $CQ_{n}^{1}-F_{1}$. Since $2^{n-1}-(4n-9)-4\geq1$ $(n\geq7)$, by Proposition \ref{match}, $CQ_{n}[V(B_{0})\cup V(B_{1})]$ is connected. If $u$ is connected to $P$, then $CQ_{n}-F$ has two components, one of which is $CQ_{n}[V(B_{0})\cup V(B_{1})]$ and the other is $CQ_{n}[\{u\}\cup V(P)]$ with $|\{u\}\cup V(P)|=1+3=4$. If $CQ_{n}^{1}-F_{1}$ satisfies one of the conditions (1)-(7) except (3), then $F$ is not a minimum 3-extra cut of $CQ_{n}$, a contradiction.

{\it Case 3.} $|F_{1}|=4n-13$.

In this case, $|F_{0}|=4n-9-(4n-13)=4$. By Lemma \ref{k}, $CQ_{n}^{0}-F_{0}$ is connected.
Note that $|F_{1}|=4n-13=4(n-1)-9$.
By Lemma \ref{n6f4n-9}, $CQ_{n}^{1}-F_{1}$ satisfies one of the following conditions:
\\(1) $CQ_{n}^{1}-F_{1}$ is connected;
\\(2) $CQ_{n}^{1}-F_{1}$ has two components, one of which is a $K_{2}$;
\\(3) $CQ_{n}^{1}-F_{1}$ has two components, one of which is a $K_{1,3}$;
\\(4) $CQ_{n}^{1}-F_{1}$ has two components, one of which is a 2-path;
\\(5) $CQ_{n}^{1}-F_{1}$ has two components, one of which is a 3-path;
\\(6) $CQ_{n}^{1}-F_{1}$ has two components, one of which is an isolated vertex;
\\(7) $CQ_{n}^{1}-F_{1}$ has three components, two of which are isolated vertices;
\\(8) $CQ_{n}^{1}-F_{1}$ has four components, three of which are isolated vertices;
\\(9) $CQ_{n}^{1}-F_{1}$ has three components, one of which is an isolated vertex and the other is a $K_{2}$;
\\(10) $CQ_{n}^{1}-F_{1}$ has three components, one of which is an isolated vertex and the other is a 2-path.

Suppose that $CQ_{n}^{1}-F_{1}$ satisfies the condition (3), then $CQ_{n}^{1}-F_{1}$ has two components, one of which is a $K_{1,3}$. Note that $|V(K_{1,3})|=4$. If $N_{CQ_{n}}(V(K_{1,3}))\cap V(CQ_{n}^{0})=F_{0}$, then $CQ_{n}-F$ has two components, one of which is a $K_{1,3}$. Similarly, if $CQ_{n}^{1}-F_{1}$ satisfies the condition (5),
then $CQ_{n}-F$ has two components, one of which is a subgraph $H$ of $CQ_{n}$ with $|V(H)|=4$.
If $CQ_{n}^{1}-F_{1}$ satisfies one of the conditions (1)-(10) except (3) and (5), then $F$ is not a minimum 3-extra cut of $CQ_{n}$, a contradiction.

{\it Case 4.} $4n-12\leq|F_{1}|\leq4n-9$.

In this case, $|F_{0}|\leq4n-9-(4n-12)=3$. By Lemma \ref{k}, $CQ_{n}^{0}-F_{0}$ is connected.
Suppose that $CQ_{n}^{1}-F_{1}$ is connected. Since $2^{n-1}-(4n-9)\geq1$ $(n\geq7)$,
by Proposition \ref{match}, $CQ_{n}[V(CQ_{n}^{0}-F_{0})\cup V(CQ_{n}^{1}-F_{1})]=CQ_{n}-F$ is connected. Then we suppose that $CQ_{n}^{1}-F_{1}$ is disconnected.
Let the components of $CQ_{n}^{1}-F_{1}$ be $C_{1}$, $C_{2}$, \ldots, $C_{k}$ $(k\geq2)$.
Note that $|F_{0}|\leq3$.
If every component $C_{i}$ of $CQ_{n}^{1}-F_{1}$ such that $|V(C_{i})|\geq4$ for $i\in\{1,\ldots,k\}$, then $CQ_{n}[V(CQ_{n}^{0}-F_{0})\cup V(C_{1})\cup\cdots\cup V(C_{k})]=CQ_{n}-F$ is connected. Suppose that there is a components $C_{i}$ such that $|V(C_{i})|\leq3$ for $i\in\{1,\ldots,k\}$. If $N_{CQ_{n}}(V(C_{i}))\cap V(CQ_{n}^{0})\subseteq F_{0}$, then $C_{i}$ is a component of $CQ_{n}-F$ with $|V(C_{i})|\leq3$. Thus, $F$ is not a 3-extra cut of $CQ_{n}$. This is a contradiction to that $F$ is a minimum 3-extra cut of $CQ_{n}$.
\end{proof}

\section{The 3-extra diagnosaility of crossed cubes under the PMC and MM$^*$ model}

\begin{theorem}(\cite{da,yu,za})\label{pmc} A system $G=(V,E)$ is $g$-extra $t$-diagnosable under the PMC model if and only if there is an edge $uv\in E$ with $u\in V\setminus(F_{1}\cup F_{2})$ and $v\in F_{1}\triangle F_{2}$ for each distinct pair of $g$-extra faulty subsets $F_{1}$ and $F_{2}$ of $V(CQ_{n})$ with $|F_{1}|\leq t$ and $|F_{2}|\leq t$ (see Fig.3).
\end{theorem}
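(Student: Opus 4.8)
The plan is to reduce the diagnosability statement to a purely combinatorial condition on the syndromes a faulty set can produce, in the classical Dahbura--Masson style adapted to the $g$-extra setting. First I would fix the PMC test model: a syndrome is a function $\sigma$ assigning to each ordered adjacent pair $u\to v$ a value in $\{0,1\}$, with the rule that whenever the tester $u$ is fault-free we have $\sigma(u\to v)=1$ exactly when $v$ is faulty (a faulty tester may report either value). For a fault set $F\subseteq V$, let $\Sigma(F)$ be the set of all syndromes consistent with $F$ being precisely the set of faulty processors; thus $\sigma\in\Sigma(F)$ iff for every ordered pair $u\to v$ with $u\notin F$ we have $\sigma(u\to v)=1$ when $v\in F$ and $\sigma(u\to v)=0$ when $v\notin F$. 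By the definition of $g$-extra $t$-diagnosability, $G$ has this property iff no syndrome is consistent with two different $g$-extra faulty subsets of size at most $t$; equivalently, for every pair of distinct $g$-extra faulty subsets $F_1,F_2$ with $|F_1|,|F_2|\le t$ one has $\Sigma(F_1)\cap\Sigma(F_2)=\emptyset$.

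The core step is the combinatorial equivalence: $\Sigma(F_1)\cap\Sigma(F_2)=\emptyset$ if and only if there is an edge $uv\in E$ with $u\in V\setminus(F_1\cup F_2)$ and $v\in F_1\triangle F_2$. For the ``if'' direction, given such an edge, assume without loss of generality that $v\in F_1\setminus F_2$; then every $\sigma\in\Sigma(F_1)$ has $\sigma(u\to v)=1$ (since $u\notin F_1$ and $v\in F_1$) while every $\sigma\in\Sigma(F_2)$ has $\sigma(u\to v)=0$ (since $u\notin F_2$ and $v\notin F_2$), so the two sets are disjoint. For the ``only if'' direction I would argue the contrapositive: assuming no such edge exists, I construct $\sigma^\ast\in\Sigma(F_1)\cap\Sigma(F_2)$ explicitly. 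For an ordered pair $u\to v$ with $u\notin F_1$, set $\sigma^\ast(u\to v)=1$ if $v\in F_1$ and $0$ otherwise; for $u\to v$ with $u\notin F_2$, set $\sigma^\ast(u\to v)=1$ if $v\in F_2$ and $0$ otherwise; for the remaining pairs (tester in $F_1\cap F_2$) set the value arbitrarily. These first two prescriptions can collide on a pair $u\to v$ only when $u\in V\setminus(F_1\cup F_2)$, and then the no-edge hypothesis forces $v\notin F_1\triangle F_2$, so the two prescribed values agree; hence $\sigma^\ast$ is well defined and, by construction, lies in $\Sigma(F_1)\cap\Sigma(F_2)$.

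Putting the two parts together gives the theorem: $G$ is $g$-extra $t$-diagnosable iff every distinct pair of $g$-extra faulty subsets of size at most $t$ has disjoint syndrome sets, which by the equivalence holds iff every such pair admits an edge $uv$ with $u\in V\setminus(F_1\cup F_2)$ and $v\in F_1\triangle F_2$ --- exactly the stated condition. I expect the one place needing care to be the well-definedness of $\sigma^\ast$: one must treat the undirected edge $uv$ as giving both ordered tests $u\to v$ and $v\to u$ and check the collision analysis for each orientation. Everything else is bookkeeping, and the $g$-extra hypothesis enters only by restricting which pairs $(F_1,F_2)$ are quantified over, so the argument is structurally identical to the classical characterization of $t$-diagnosability under the PMC model.
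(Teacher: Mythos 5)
Your proposal is correct: the well-definedness analysis of $\sigma^\ast$ and the disjoint-syndrome reduction are exactly the classical Dahbura--Masson-style argument, adapted to the $g$-extra setting only through the restriction on which pairs $(F_{1},F_{2})$ are quantified over. Note that the paper itself offers no proof of this theorem --- it is quoted with citations to \cite{da,yu,za} --- and your argument is essentially the standard one given in those sources, so there is nothing to flag.
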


\vspace{1cm}

%TeXCAD Picture [pmc.pic]. Options:
%\grade{\on}
%\emlines{\off}
%\epic{\off}
%\beziermacro{\on}
%\reduce{\on}
%\snapping{\off}
%\quality{8.00}
%\graddiff{0.01}
%\snapasp{1}
%\zoom{4.0000}
\unitlength 1mm % = 2.85pt
\linethickness{0.4pt}
\ifx\plotpoint\undefined\newsavebox{\plotpoint}\fi % GNUPLOT compatibility
\begin{picture}(61.5,32.46)(0,0)
\put(43.5,25.46){\oval(16,8)[]}
\put(53.5,25.46){\oval(16,8)[]}
\put(40,32.46){$F_{1}$}
\put(53,32.46){$F_{2}$}
\put(13.75,25.2){\oval(16,8)[]}
\put(23.75,25.2){\oval(16,8)[]}
\put(10.75,24.2){\line(0,-1){8}}
\put(10.75,24.2){\circle*{1}}
\put(10.75,16.2){\circle*{1}}
\put(8.25,24.95){$v$}
\put(8.25,16.7){$u$}
\put(10,32.2){$F_{1}$}
\put(23,32.2){$F_{2}$}
\put(56,24.46){\line(0,-1){8}}
\put(56,24.46){\circle*{1}}
\put(56,16.46){\circle*{1}}
\put(58.75,16.25){\makebox(0,0)[cc]{$u$}}
\put(58.25,24.5){\makebox(0,0)[cc]{$v$}}
\put(40,6){\makebox(0,0)[cc]{Fig.3. $u$ diagnosis $v$ under the PMC model.}}
\put(33.6,25){\makebox(0,0)[cc]{or}}
\end{picture}

\begin{theorem}(\cite{da,yu,za}) \label{mmt} A system $G = (V, E)$ is $g$-extra $t$-diagnosable
under the MM$^*$ model if and only if each distinct pair of $g$-extra faulty subsets $F_1$ and $F_2$ of $V$ with $|F_1|\leq t$ and $|F_2| \leq t$ satisfies one of the following conditions (see Fig.4):

(1) There exist two vertices $u,w\in V(G)\setminus (F_{1}\cup F_{2})$  and there exists a vertex $v\in F_{1}\triangle F_{2}$ such that $uw, vw\in E(G)$.

(2) There exist two vertices $u,v\in F_{1}\setminus F_{2}$ and there exists a vertex $w\in V(G)\setminus (F_{1}\cup F_{2})$ such that $uw,vw\in E(G)$.

(3) There exist two vertices $u,v\in F_{2}\setminus F_{1}$ and there exists a vertex $w\in V(G)\setminus (F_{1}\cup F_{2})$ such that $uw,vw\in E(G)$.
\end{theorem}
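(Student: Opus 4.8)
The plan is to reduce the statement to a combinatorial criterion for when two fault sets are \emph{indistinguishable} and then verify that criterion by a case analysis on the location of the comparator. Recall that in the MM$^{*}$ model a syndrome $\sigma$ is \emph{consistent} with a fault set $F$ when, for every comparison $(u,v)_{w}$ (with $u,v\in N_{G}(w)$), one has $\sigma((u,v)_{w})=0$ if $w,u,v\notin F$, $\sigma((u,v)_{w})=1$ if $w\notin F$ but $\{u,v\}\cap F\neq\emptyset$, and $\sigma((u,v)_{w})$ is unconstrained if $w\in F$. Two distinct $g$-extra fault sets $F_{1},F_{2}$ are \emph{indistinguishable} if some syndrome is consistent with both, and $G$ is $g$-extra $t$-diagnosable exactly when no two distinct $g$-extra fault sets of size at most $t$ are indistinguishable. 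Hence it suffices to prove that $F_{1}$ and $F_{2}$ are distinguishable if and only if one of (1)--(3) holds; the theorem then follows immediately.

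For the ``if'' part I would exhibit, in each of the three situations, a single comparison that no common syndrome can satisfy. If (1) holds, take the witnesses $u,w\notin F_{1}\cup F_{2}$ and $v\in F_{1}\triangle F_{2}$: the comparison $(u,v)_{w}$ is forced to $1$ by the member of $\{F_{1},F_{2}\}$ containing $v$ and to $0$ by the other, so $F_{1},F_{2}$ are distinguishable. If (2) holds, with $u,v\in F_{1}\setminus F_{2}$ and $w\notin F_{1}\cup F_{2}$, the comparison $(u,v)_{w}$ is forced to $1$ by $F_{1}$ and to $0$ by $F_{2}$; condition (3) is symmetric.

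The ``only if'' part is the substantive one: assuming none of (1)--(3) holds, I would construct one syndrome $\sigma$ consistent with both $F_{1}$ and $F_{2}$ by partitioning $V$ into the four Venn regions $V\setminus(F_{1}\cup F_{2})$, $F_{1}\cap F_{2}$, $F_{1}\setminus F_{2}$, $F_{2}\setminus F_{1}$ and defining $\sigma((u,v)_{w})$ according to the region of the comparator $w$. If $w\in F_{1}\cap F_{2}$ every value is legal for both; if $w\in F_{1}\setminus F_{2}$ (resp.\ $F_{2}\setminus F_{1}$) the value is free for $F_{1}$ (resp.\ $F_{2}$), so we copy the value demanded by the other set. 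The only place where the hypothesis is needed is $w\notin F_{1}\cup F_{2}$: here one must check that the demand ``$\sigma=0\iff u,v\notin F_{1}$'' and the demand ``$\sigma=0\iff u,v\notin F_{2}$'' never conflict, i.e.\ that the pair $\{u,v\}$ meets $F_{1}$ exactly when it meets $F_{2}$. If not, then either some neighbour of $w$ lies in $F_{1}\triangle F_{2}$ with the other neighbour outside $F_{1}\cup F_{2}$ (which realises (1)), or both neighbours lie in $F_{1}\setminus F_{2}$ (realising (2)), or both in $F_{2}\setminus F_{1}$ (realising (3)); in every other configuration the two demands agree and $\sigma((u,v)_{w})$ can be fixed consistently. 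Assembling the cases yields a syndrome consistent with both fault sets, so they are indistinguishable.

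The main obstacle I anticipate is precisely the bookkeeping in the $w\notin F_{1}\cup F_{2}$ case: one has to enumerate the joint membership status of the pair $\{u,v\}$ with respect to $F_{1}$ and $F_{2}$ and match each ``bad'' pattern to exactly one of (1)--(3), while noting that the $g$-extra hypothesis is inherited unchanged by $F_{1}$ and $F_{2}$ and plays no role in this combinatorial core.
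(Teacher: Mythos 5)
Your proposal is correct, but note that the paper itself offers no proof of this statement: it is quoted as a known characterization with citations to Dahbura--Masson, Yuan et al.\ and Zhang--Yang, so there is no internal argument to compare against. What you have written is essentially the standard Sengupta--Dahbura-style proof that those references give: reduce $g$-extra $t$-diagnosability to pairwise distinguishability of admissible fault sets, show each of (1)--(3) produces a comparison $(u,v)_w$ whose outcome is forced to $1$ by one fault set and $0$ by the other, and conversely build a common syndrome region by region on the position of the comparator $w$, with the only nontrivial check at $w\notin F_1\cup F_2$, where a conflict between the two consistency requirements forces $\{u,v\}$ to meet exactly one of $F_1,F_2$ and hence realizes (1), (2) or (3). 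Your enumeration there is complete (a neighbour in $F_1\cap F_2$ or one neighbour in each difference set gives no conflict), and your remark that the $g$-extra restriction plays no role in this combinatorial core is accurate --- it only delimits which pairs $(F_1,F_2)$ must be tested. The one step you state without elaboration, namely that $g$-extra $t$-diagnosability is equivalent to no two distinct $g$-extra fault sets of size at most $t$ being consistent with a common syndrome, is exactly the definitional reduction used in the cited sources, so the sketch is sound as it stands; filling it in would make your argument a self-contained replacement for the citation.
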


%TeXCAD Picture [MMMM.pic]. Options:
%\grade{\on}
%\emlines{\off}
%\epic{\off}
%\beziermacro{\on}
%\reduce{\on}
%\snapping{\off}
%\quality{8.00}
%\graddiff{0.01}
%\snapasp{1}
%\zoom{4.0000}
\unitlength 1mm % = 2.85pt
\linethickness{0.4pt}
\ifx\plotpoint\undefined\newsavebox{\plotpoint}\fi % GNUPLOT compatibility
\begin{picture}(49.5,38.25)(0,0)
\put(22,28.13){\oval(24.5,9.75)[]}
\put(37,28){\oval(25,10)[]}
%\emline(18,28.5)(16.25,17.75)
\multiput(18,28.5)(-.0336538,-.2067308){52}{\line(0,-1){.2067308}}
%\end
%\emline(16.25,17.75)(14,28.5)
\multiput(16.25,17.75)(-.0335821,.1604478){67}{\line(0,1){.1604478}}
%\end
\put(36.25,28.5){\line(0,-1){10.75}}
%\emline(36.25,18)(32,13)
\multiput(36.25,18)(-.03373016,-.03968254){126}{\line(0,-1){.03968254}}
%\end
%\emline(40.75,28.25)(41.75,17.75)
\multiput(40.75,28.25)(.033333,-.35){30}{\line(0,-1){.35}}
%\end
%\emline(41.75,17.75)(44.5,28.25)
\multiput(41.75,17.75)(.0335366,.1280488){82}{\line(0,1){.1280488}}
%\end
\put(22.75,28.5){\line(0,-1){10.25}}
\put(22.75,18.25){\line(3,-4){3.75}}
\put(14,27.75){\circle*{1}}
\put(18.25,28.25){\circle*{1}}
\put(16.25,18){\circle*{1}}
\put(22.5,28.5){\circle*{1}}
\put(22.75,18){\circle*{1}}
\put(26.25,13.5){\circle*{1}}
\put(32.5,13.25){\circle*{1}}
\put(36.25,17.75){\circle*{1}}
\put(36.25,28.5){\circle*{1}}
\put(40.75,28){\circle*{1}}
\put(44.25,27.75){\circle*{1}}
\put(42,18){\circle*{1}}
\put(19.5,38){\makebox(0,0)[ct]{$F_{1}$}}
\put(39.25,38.25){\makebox(0,0)[ct]{$F_{2}$}}
\put(14,29.25){\makebox(0,0)[cb]{$u$}}
\put(18.25,29.25){\makebox(0,0)[cb]{$v$}}
\put(16,17.25){\makebox(0,0)[ct]{$w$}}
\put(22.5,29.25){\makebox(0,0)[cb]{$v$}}
\put(22,16.75){\makebox(0,0)[rc]{$w$}}
\put(36.25,29.5){\makebox(0,0)[cb]{$v$}}
\put(40.75,29.25){\makebox(0,0)[cb]{$v$}}
\put(44.25,29.5){\makebox(0,0)[cb]{$u$}}
\put(36.75,16.75){\makebox(0,0)[lc]{$w$}}
\put(33.5,13){\makebox(0,0)[lc]{$u$}}
\put(42,16.6){\makebox(0,0)[cc]{$w$}}
\put(25,13.5){\makebox(0,0)[rc]{$u$}}
\put(24,22){\makebox(0,0)[lt]{(1)}}
\put(12.75,22){\makebox(0,0)[ct]{(2)}}
\put(34.75,22){\makebox(0,0)[rt]{(1)}}
\put(43.5,22){\makebox(0,0)[lt]{(3)}}
\put(0,7){Fig.4. $w$ diagnosis $u$ and $v$}
\put(10,2){under the MM* model.}
\end{picture}

\begin{lemma}\label{pmct1} Let $n\geq4$. Then the 3-extra diagnosability of the crossed cube $CQ_{n}$ under the PMC and MM* model is less than or equal to $4n-6$, i.e.,  $\tilde{t_{3}}(CQ_{n})\leq 4n-6$.
\end{lemma}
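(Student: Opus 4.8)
The plan is to exhibit, for every $n\geq 4$, a pair of distinct $3$-extra faulty subsets $F_{1},F_{2}$ of $V(CQ_{n})$ with $|F_{1}|,|F_{2}|\leq 4n-5$ that is indistinguishable, i.e.\ that violates the characterization of Theorem \ref{pmc} under the PMC model and that of Theorem \ref{mmt} under the MM$^{*}$ model. Once such a pair is produced, $CQ_{n}$ fails to be $3$-extra $(4n-5)$-diagnosable under either model, so $\tilde{t_{3}}(CQ_{n})\leq 4n-6$.

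For the construction I would reuse the set $A=\{0\cdots0000,0\cdots0100,0\cdots0110,0\cdots0111\}$ and $F=N_{CQ_{n}}(A)$ from Lemma \ref{A}, which gives $|F|=4n-9$, $CQ_{n}[A]$ a $3$-path on $4$ vertices, and $CQ_{n}-(A\cup F)$ connected. Set $F_{1}=F$ and $F_{2}=A\cup F$, so that $|F_{1}|=4n-9\leq 4n-5$ and $|F_{2}|=4n-5$. First I would verify that both are genuine $3$-extra faulty subsets: $CQ_{n}-F_{1}$ consists of the component $A$ of order $4$ together with $CQ_{n}-(A\cup F)$, which is connected of order $2^{n}-(4n-9)-4\geq 5$ for $n\geq 4$; and $CQ_{n}-F_{2}=CQ_{n}-(A\cup F)$ is connected of order $\geq 5$. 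In both cases every component has at least $4=3+1$ vertices, as required. Moreover $F_{1}\subsetneq F_{2}$, hence $F_{1}\vartriangle F_{2}=A$ and $V(CQ_{n})\setminus(F_{1}\cup F_{2})=V\bigl(CQ_{n}-(A\cup F)\bigr)$.

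The decisive point, which I would check next, is that every neighbour in $CQ_{n}$ of a vertex of $A$ either lies in $A$ or lies in $N_{CQ_{n}}(A)=F\subseteq F_{2}$. Under the PMC model this forces the absence of any edge $uv$ with $u\in V(CQ_{n})\setminus(F_{1}\cup F_{2})$ and $v\in F_{1}\vartriangle F_{2}=A$, so the condition of Theorem \ref{pmc} fails. Under the MM$^{*}$ model, condition (2) of Theorem \ref{mmt} fails because $F_{1}\setminus F_{2}=\emptyset$, while conditions (1) and (3) fail because any vertex $w$ adjacent to a vertex of $A$ belongs to $F\subseteq F_{2}$ and therefore cannot lie in $V(CQ_{n})\setminus(F_{1}\cup F_{2})$; so no admissible middle vertex $w$ exists. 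Hence $(F_{1},F_{2})$ is indistinguishable under both models, and Theorems \ref{pmc} and \ref{mmt} yield $\tilde{t_{3}}(CQ_{n})\leq 4n-6$.

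I do not expect a real obstacle in this argument; the only step needing a little care is confirming that $F_{1}$ and $F_{2}$ are $3$-extra faulty subsets, and this is precisely what Lemma \ref{A} was set up to provide, since it guarantees that removing $A\cup F$ leaves a single large connected remainder and thus no small fault-free component other than $A$ can occur.
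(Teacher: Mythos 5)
Your proposal is correct and follows essentially the same route as the paper: both take $F_{1}=N_{CQ_{n}}(A)$ and $F_{2}=A\cup F_{1}$ with $A$ from Lemma \ref{A}, verify that both are 3-extra faulty sets of sizes $4n-9$ and $4n-5$, observe that $F_{1}\triangle F_{2}=A$ has no edge to $V(CQ_{n})\setminus(F_{1}\cup F_{2})$, and invoke Theorems \ref{pmc} and \ref{mmt} to conclude $\tilde{t_{3}}(CQ_{n})\leq 4n-6$. Your explicit check of the three MM$^{*}$ conditions is slightly more detailed than the paper's, but the argument is the same.
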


\begin{proof} Let $A$ be defined in Lemma \ref{A}, $F_{1}=N_{CQ_{n}}(A)$ and $F_{2}=A\cup F_{1}$. By Lemma \ref{A}, $|F_{1}|=4n-9$, $|F_{2}|=4n-5$ and $CQ_{n}-F_{2}$ is connected. Thus,
$CQ_{n}-F_{1}$ has two components $CQ_{n}-F_{2}$ and $CQ_{n}[A]$. Note that $|A|=4$ and $|V(CQ_{n}-F_{2})|=2^{n}-(4n-5)\geq4$ for $n\geq4$. By the definition of 3-extra connectivity,
$F_{1}$ is a 3-extra cut of $CQ_{n}$.
Thus, $F_{1}$ and $F_{2}$ are both 3-extra faulty sets of $CQ_{n}$ with $|F_{1}|=4n-9$ and $|F_{2}|=4n-5$. Since $A=F_{1}\triangle F_{2}$ and $N_{CQ_{n}}(A)=F_{1}\subset F_{2}$, there is no edge of $CQ_{n}$ between $V(CQ_{n})\setminus (F_{1}\cup F_{2})$ and $F_{1}\triangle F_{2}$. By Theorems \ref{pmc} and \ref{mmt},  $CQ_{n}$ is not 3-extra $(4n-5)$-diagnosable under PMC and MM* model, respectively. By the definition of 3-extra diagnosability, we can deduce that the 3-extra diagnosability of $CQ_{n}$ is less than or equal to $4n-6$, i.e., $\tilde{t_{3}}(CQ_{n})\leq 4n-6$.
\end{proof}

\begin{lemma}\label{pmct2} Let $n\geq5$. Then the 3-extra diagnosability of the crossed cube $CQ_{n}$ under the PMC model is more than or equal to $4n-6$, i.e., $\tilde{t_{3}}(CQ_{n})\geq 4n-6$.
\end{lemma}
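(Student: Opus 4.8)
The plan is to apply the PMC characterization of Theorem~\ref{pmc} and argue by contradiction. Suppose $CQ_{n}$ is not $3$-extra $(4n-6)$-diagnosable. Then there are two distinct $3$-extra faulty sets $F_{1},F_{2}\subseteq V(CQ_{n})$ with $|F_{1}|\le 4n-6$ and $|F_{2}|\le 4n-6$ such that no edge of $CQ_{n}$ joins $V(CQ_{n})\setminus(F_{1}\cup F_{2})$ to $F_{1}\triangle F_{2}$. Write $W=V(CQ_{n})\setminus(F_{1}\cup F_{2})$ and $T=F_{1}\cap F_{2}$. First I would record the easy consequences of this edge-freeness: $|W|=2^{n}-|F_{1}\cup F_{2}|\ge 2^{n}-2(4n-6)>0$ for $n\ge5$, so $W\ne\emptyset$; since $F_{1}\ne F_{2}$ we may assume without loss of generality $F_{2}\setminus F_{1}\ne\emptyset$; and because $F_{1}\setminus F_{2}$, $F_{2}\setminus F_{1}$, and $F_{1}\triangle F_{2}$ are all contained in $F_{1}\triangle F_{2}$, the graphs $CQ_{n}-F_{1}$, $CQ_{n}-F_{2}$, and $CQ_{n}-T$ split, with no connecting edges, as $CQ_{n}[W]$ together with $CQ_{n}[F_{2}\setminus F_{1}]$, as $CQ_{n}[W]$ together with $CQ_{n}[F_{1}\setminus F_{2}]$, and as $CQ_{n}[W]$ together with $CQ_{n}[F_{1}\triangle F_{2}]$, respectively.

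Next I would pin down the size of $T$. Since $F_{1}$ is a $3$-extra faulty set, every component of $CQ_{n}-F_{1}$ has at least $4$ vertices; in particular every component of $CQ_{n}[F_{2}\setminus F_{1}]$ does, so $|F_{2}\setminus F_{1}|\ge4$ and hence $|T|=|F_{2}|-|F_{2}\setminus F_{1}|\le 4n-10$. As $W$ and $F_{1}\triangle F_{2}$ are non-empty with no edge between them, $CQ_{n}-T$ is disconnected, so $T$ is a vertex cut and $|T|\ge\kappa(CQ_{n})=n$ by Lemma~\ref{k}. I would then show that every component $C$ of $CQ_{n}-T$ that is contained in $F_{1}\triangle F_{2}$ has $|C|\ge4$: pick $v\in C$; if $v\in F_{1}\setminus F_{2}$, then since $CQ_{n}-F_{2}=(CQ_{n}-T)-(F_{2}\setminus F_{1})$, the component of $v$ in $CQ_{n}-F_{2}$ is contained in $C$, and it has at least $4$ vertices because $F_{2}$ is $3$-extra, so $|C|\ge4$; the case $v\in F_{2}\setminus F_{1}$ is symmetric using that $F_{1}$ is $3$-extra, and if $F_{1}\setminus F_{2}=\emptyset$ the claim already follows from $|F_{2}\setminus F_{1}|\ge4$.

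Finally I would combine this with the structural lemmas. Since $CQ_{n}-T$ is disconnected with $n\le|T|\le 4n-10$, and the size ranges of Lemmas~\ref{2n-3}, \ref{3n-6}, and \ref{4n-10} together cover the interval $[n,4n-10]$ for $n\ge5$, one of these lemmas applies and shows that $CQ_{n}-T$ has a component of order at most $3$. By the claim of the previous paragraph such a component cannot lie in $F_{1}\triangle F_{2}$, so it is a component $C'$ of $CQ_{n}[W]$, hence a component of $CQ_{n}-F_{1}$, with $|C'|\le3<4$; this contradicts $F_{1}$ being a $3$-extra faulty set. Therefore $CQ_{n}$ is $3$-extra $(4n-6)$-diagnosable under the PMC model, i.e. $\tilde{t_{3}}(CQ_{n})\ge 4n-6$. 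The only real obstacle is the bookkeeping in the last step: checking that the three ranges genuinely tile $[n,4n-10]$ once $n\ge5$, and that every disconnection pattern listed in Lemmas~\ref{2n-3}, \ref{3n-6}, and \ref{4n-10} contains a component of order at most $3$ (the $2$-path being the largest such piece).
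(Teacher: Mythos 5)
Your proof is correct, but at the decisive step it takes a different route from the paper. After the common setup (contradiction via Theorem \ref{pmc}, the counting argument showing $V(CQ_{n})\setminus(F_{1}\cup F_{2})\neq\emptyset$, the observation that the no-edge condition makes $CQ_{n}-F_{1}$, $CQ_{n}-F_{2}$ and $CQ_{n}-(F_{1}\cap F_{2})$ split along $W$ versus the symmetric difference, and the bound $|F_{2}\setminus F_{1}|\geq 4$), the paper simply notes that $F_{1}\cap F_{2}$ is a 3-extra cut and invokes Theorem \ref{4n-9} (the cited result $\tilde{\kappa}^{(3)}(CQ_{n})=4n-9$) to get $|F_{1}\cap F_{2}|\geq 4n-9$, whence $|F_{2}|\geq 4n-5$, a contradiction. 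You instead bound $|F_{1}\cap F_{2}|\leq 4n-10$, show every component of $CQ_{n}-(F_{1}\cap F_{2})$ has order at least $4$, and then derive the contradiction from the structural Lemmas \ref{2n-3}, \ref{3n-6} and \ref{4n-10}, whose ranges $[n,2n-3]$, $[2n-2,3n-6]$, $[3n-5,4n-10]$ do tile $[n,4n-10]$ for $n\geq5$ and whose disconnected outcomes all contain a component of order at most $3$; in effect you re-prove the lower bound $\tilde{\kappa}^{(3)}(CQ_{n})\geq 4n-9$ from the paper's own lemmas rather than quoting it. Your version is more self-contained (it does not rely on the external Theorem \ref{4n-9}) at the cost of the range bookkeeping; the paper's is shorter given the citation. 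One cosmetic remark: your fallback "if $F_{1}\setminus F_{2}=\emptyset$ the claim already follows from $|F_{2}\setminus F_{1}|\geq4$" is not by itself a justification (a set of size $\geq4$ could split into small components), but it is harmless, since in that case $CQ_{n}-(F_{1}\cap F_{2})=CQ_{n}-F_{1}$ and the components in question have order at least $4$ directly because $F_{1}$ is a 3-extra faulty set, exactly as in your symmetric case.
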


\begin{proof} By the definition of the 3-extra diagnosability, it is sufficient to show that $CQ_{n}$ is 3-extra $(4n-6)$-diagnosable. By Theorem \ref{pmc}, we need to prove that there is an edge $uv\in E$ with $u\in V(CQ_{n})\setminus(F_{1}\cup F_{2})$ and $v\in F_{1}\triangle F_{2}$ for each distinct pair of 3-extra faulty subsets $F_{1}$ and $F_{2}$ of $V(CQ_{n})$ with $|F_{1}|\leq 4n-6$ and $|F_{2}|\leq 4n-6$.

Suppose, on the contrary,  that there are two distinct 3-extra faulty subsets $F_{1}$ and $F_{2}$ of $V(CQ_{n})$ with $|F_{1}|\leq 4n-6$ and $|F_{2}|\leq 4n-6$, but there is no edge between $V(CQ_{n})\setminus(F_{1}\cup F_{2})$ and $F_{1}\triangle F_{2}$. Without loss of generality, assume that $F_{2}\setminus F_{1}\neq \emptyset$.

{\it Claim 1.} $V(CQ_{n})\neq F_{1}\cup F_{2}$.

On the contrary, we suppose that $V(CQ_{n})= F_{1}\cup F_{2}$. We get
$2^{n}=|V(CQ_{n})|=|F_{1}\cup F_{2}|=|F_{1}|+|F_{2}|-|F_{1}\cap F_{2}|\leq |F_{1}|+|F_{2}|\leq 2(4n-6)=8n-12$, a contradiction to $n\geq5$. Therefore, $V(CQ_{n})\neq F_{1}\cup F_{2}$. The proof of Claim 1 is complete.

Since there is no edge between $V(CQ_{n})\setminus(F_{1}\cup F_{2})$ and $F_{1}\triangle F_{2}$, $CQ_{n}-F_{1}$ has two parts $CQ_{n}\setminus (F_{1}\cup F_{2})$ and $CQ_{n}[F_{2}\setminus F_{1}]$. Note that $F_{1}$ is a 3-extra faulty set. Thus, every component $B_{i}$ of $CQ_{n}\setminus (F_{1}\cup F_{2})$ such that $|V(B_{i})|\geq4$ and every component $C_{i}$ of $CQ_{n}[F_{2}\setminus F_{1}]$ such that $|V(C_{i})|\geq4$.
If $F_{1}\setminus F_{2}=\emptyset$, then $F_{1}\cap F_{2}=F_{1}$. Thus, $F_{1}\cap F_{2}$ is also a 3-extra faulty set. If $F_{1}\setminus F_{2}\neq\emptyset$, similarly, every component $D_{i}$ of $CQ_{n}[F_{1}\setminus F_{2}]$ such that $|V(D_{i})|\geq4$. Thus, $F_{1}\cap F_{2}$ is also a 3-extra faulty set. Since there is no edge between $V(CQ_{n})\setminus(F_{1}\cup F_{2})$ and $F_{1}\triangle F_{2}$, $F_{1}\cap F_{2}$ is a 3-extra cut of $CQ_{n}$. By Theorem \ref{4n-9}, $|F_{1}\cap F_{2}|\geq 4n-9$. Thus, $|F_{2}|=|F_{2}\setminus F_{1}|+|F_{1}\cap F_{2}|\geq 4+4n-9=4n-5$. This is a contradiction to that $|F_{2}|\leq 4n-6$. Therefore, $CQ_{n}$ is 3-extra $(4n-6)$-diagnosable, i.e., $\tilde{t_{3}}(CQ_{n})\geq 4n-6$.
\end{proof}

Combining Lemmas \ref{pmct1} and \ref{pmct2}, we have the following theorem.

\begin{theorem} Let $n\geq5$. Then the 3-extra diagnosability of the crossed cube $CQ_{n}$ under the PMC model is $4n-6$, i.e., $\tilde{t_{3}}(CQ_{n})= 4n-6$.
\end{theorem}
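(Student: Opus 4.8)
The plan is to combine the two preceding lemmas directly, since the final theorem is just the assertion $\tilde{t_{3}}(CQ_{n}) = 4n-6$ for $n \geq 5$, and this is the conjunction of an upper bound and a lower bound that have already been established. First I would invoke Lemma \ref{pmct1}, which gives $\tilde{t_{3}}(CQ_{n}) \leq 4n-6$ for all $n \geq 4$ (in particular for $n \geq 5$) under both the PMC and MM$^*$ models; this half is witnessed by the explicit pair of $3$-extra faulty sets $F_{1} = N_{CQ_{n}}(A)$ and $F_{2} = A \cup F_{1}$ built from the $3$-path $A$ of Lemma \ref{A}, between which there is no distinguishing edge. Then I would invoke Lemma \ref{pmct2}, which gives the matching lower bound $\tilde{t_{3}}(CQ_{n}) \geq 4n-6$ for $n \geq 5$ under the PMC model, via the counting argument that forces $|F_{1} \cap F_{2}| \geq \tilde{\kappa}^{(3)}(CQ_{n}) = 4n-9$ (Theorem \ref{4n-9}) and hence $|F_{2}| \geq 4n-5$, contradicting $|F_{2}| \leq 4n-6$.

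The write-up is therefore a two-line synthesis: from Lemma \ref{pmct1} we have $\tilde{t_{3}}(CQ_{n}) \leq 4n-6$, from Lemma \ref{pmct2} we have $\tilde{t_{3}}(CQ_{n}) \geq 4n-6$, and combining the two inequalities yields $\tilde{t_{3}}(CQ_{n}) = 4n-6$ for $n \geq 5$ under the PMC model. The only thing to watch is the range of $n$: Lemma \ref{pmct1} needs $n \geq 4$ and Lemma \ref{pmct2} needs $n \geq 5$, so the intersection $n \geq 5$ is exactly the hypothesis of the theorem, and no further restriction is needed.

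There is essentially no obstacle here — all the work has been front-loaded into the two lemmas. If I were to flag anything as the ``hard part,'' it would be that the substance of the result lives in Lemma \ref{pmct2}, which in turn rests on Theorem \ref{4n-9} (the value $\tilde{\kappa}^{(3)}(CQ_{n}) = 4n-9$) and on the structural fact that removing a $3$-extra cut cannot leave small components, so that $F_{1} \cap F_{2}$ is again a $3$-extra cut; but these are cited results, not something to re-prove. A short proof of the theorem reads as follows.

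\begin{proof}
By Lemma \ref{pmct1}, $\tilde{t_{3}}(CQ_{n}) \leq 4n-6$. By Lemma \ref{pmct2}, $\tilde{t_{3}}(CQ_{n}) \geq 4n-6$. Therefore $\tilde{t_{3}}(CQ_{n}) = 4n-6$ for $n \geq 5$.
\end{proof}
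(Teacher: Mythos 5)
Your proposal is correct and matches the paper exactly: the paper states this theorem with the single line ``Combining Lemmas \ref{pmct1} and \ref{pmct2}, we have the following theorem,'' which is precisely your two-inequality synthesis, with the range $n\geq 5$ coming from Lemma \ref{pmct2} just as you note.
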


A component of a graph $G$ is odd according as it has an odd number of vertices. We denote by $o(G)$ the number of odd components of $G$.

\begin{lemma}(\cite{bj})\label{odd} A graph $G=(V,E)$ has a perfect matching if and only if $o(G-S)\leq |S|$ for all $S\subseteq V$.
\end{lemma}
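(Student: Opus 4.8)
The plan is to prove this classical Tutte-type characterization in both directions, treating the easy necessity separately from the substantially harder sufficiency. For necessity, suppose $G$ has a perfect matching $M$ and fix an arbitrary $S\subseteq V$. I would argue that each odd component $C$ of $G-S$ must send at least one edge of $M$ into $S$: since $|V(C)|$ is odd, $M$ cannot saturate all of $C$ using only edges inside $C$, so some vertex of $C$ is matched by $M$ to a vertex outside $C$, and the only vertices outside $C$ adjacent to $C$ lie in $S$. Distinct odd components are matched to distinct vertices of $S$, since $M$ is a matching, whence $o(G-S)\leq |S|$.

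For sufficiency, assume $o(G-S)\leq |S|$ for every $S$, and suppose for contradiction that $G$ has no perfect matching. Taking $S=\emptyset$ gives $o(G)=0$, so every component of $G$ is even and in particular $|V|$ is even. The first key step is a reduction: I would show that if $G$ satisfies the condition then so does every spanning supergraph $G'$ of $G$, because each component of $G'-S$ is a union of components of $G-S$, and an odd component of $G'-S$ must contain an odd (hence positive) number of odd components of $G-S$, giving $o(G'-S)\leq o(G-S)\leq |S|$. Since the complete graph on the even vertex set $V$ does have a perfect matching, this lets me pass to an edge-maximal graph on $V$ that still satisfies the condition yet has no perfect matching, which I continue to call $G$.

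The structural heart of the argument is to examine $U=\{v\in V: d_{G}(v)=|V|-1\}$, the set of vertices adjacent to all others, and to split into two cases according to the nature of $G-U$. In the first case, where every component of $G-U$ is complete, I would construct a perfect matching explicitly: using $o(G-U)\leq |U|$, pair one vertex of each odd component with a distinct vertex of $U$, match the remaining even-sized part of each component internally (possible by completeness), and match the leftover vertices of $U$ among themselves; a parity check using that $|V|$ is even shows exactly an even number of vertices remain in $U$, and they form a clique, so this finishes a perfect matching and contradicts the choice of $G$.

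The main obstacle is the remaining case, where $G-U$ is not a disjoint union of cliques. Then I would locate $x,y,z\in V\setminus U$ with $xy,yz\in E$ but $xz\notin E$, and, since $y\notin U$, pick $w$ with $yw\notin E$. By edge-maximality both $G+xz$ and $G+yw$ admit perfect matchings $M_{1}$ and $M_{2}$, each necessarily using its newly added edge, so $xz\in M_{1}$ and $yw\in M_{2}$. I would then study the symmetric difference $M_{1}\triangle M_{2}$, whose components are alternating even cycles and paths, focus on the component(s) carrying $xz$ and $yw$, and splice together alternating segments of $M_{1}$ and $M_{2}$ so as to produce a perfect matching of $G$ itself that avoids both added edges, contradicting that $G$ has none. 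Tracking precisely how $xz$ and $yw$ sit inside $M_{1}\triangle M_{2}$ and rerouting the affected segments through the genuine edges $xy$ and $yz$ of $G$ is the delicate, case-laden part, and is where I expect to spend the most care.
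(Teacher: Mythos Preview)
The paper does not prove this lemma at all: it is quoted verbatim as Tutte's theorem with a citation to Bondy and Murty \cite{bj}, and is then invoked once in the proof of Lemma~\ref{mm2} to bound the number of isolated vertices. So there is no ``paper's own proof'' to compare against.

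Your outline is a correct sketch of the classical Lov\'asz-style proof of Tutte's theorem, and the structure you describe (necessity by the pigeonhole on $M$-edges leaving odd components; sufficiency via an edge-maximal counterexample, the set $U$ of universal vertices, the clique-components case, and the $x,y,z,w$ configuration with the $M_1\triangle M_2$ splicing) is exactly the argument one finds in the cited textbook. One small point worth making explicit when you write it out in full: in the final splicing case you should distinguish whether $xz$ and $yw$ lie in the same alternating cycle of $M_1\triangle M_2$ or in different ones, since the easy subcase (different cycles) is dispatched by simply swapping $M_1$ for $M_2$ along one cycle, while the rerouting through $xy$ or $yz$ is only needed when they share a cycle. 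You flag this as ``case-laden,'' which is accurate, but naming the two subcases up front will make the write-up cleaner.
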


\begin{lemma}\label{mm2} Let $n\geq7$. Then the 3-extra diagnosability of the crossed cube $CQ_{n}$ under the MM* model is more than or equal to $4n-6$, i.e., $\tilde{t_{3}}(CQ_{n})\geq 4n-6$.
\end{lemma}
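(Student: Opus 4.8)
The plan is to establish $\tilde{t_{3}}(CQ_{n})\geq 4n-6$ for $n\geq 7$ by verifying the combinatorial characterization of Theorem \ref{mmt}. So I would suppose, for contradiction, that there exist two distinct $3$-extra faulty subsets $F_{1}$ and $F_{2}$ of $V(CQ_{n})$ with $|F_{1}|\leq 4n-6$ and $|F_{2}|\leq 4n-6$ that violate all three conditions (1)--(3) of Theorem \ref{mmt}. Without loss of generality assume $F_{2}\setminus F_{1}\neq\emptyset$, and first establish, exactly as in Lemma \ref{pmct2}, that $V(CQ_{n})\neq F_{1}\cup F_{2}$: otherwise $2^{n}\leq |F_{1}|+|F_{2}|\leq 8n-12$, which fails for $n\geq 7$.

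Next I would extract the structural consequence of violating the three MM$^*$ conditions. Violating condition (1) means: for every vertex $w\in V(CQ_{n})\setminus(F_{1}\cup F_{2})$, $w$ has no neighbour in $F_{1}\triangle F_{2}$ unless all of $w$'s neighbours outside $F_1 \cup F_2$ would create the forbidden configuration — more precisely, if $w\notin F_{1}\cup F_{2}$ has a neighbour $v\in F_{1}\triangle F_{2}$, then $w$ has no other neighbour in $V(CQ_{n})\setminus(F_{1}\cup F_{2})$, so $w$ together with its fault-free neighbours is "pinched off". The standard move here is: since $F_{1}$ is a $3$-extra faulty set, the set $V(CQ_{n})\setminus(F_{1}\cup F_{2})$ is nonempty, and one argues that $F_{1}\cap F_{2}$ is again a $3$-extra vertex cut. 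The key sub-claim is that every component of $CQ_{n}[F_{1}\triangle F_{2}]$ and of $CQ_{n}-(F_{1}\cup F_{2})$ has at least $4$ vertices; for the part inside $F_1 \triangle F_2$ this uses the violation of conditions (2) and (3) together with Lemma \ref{odd} (the perfect-matching/Tutte condition) to rule out small components — a component $C$ of $CQ_n[F_2\setminus F_1]$ with $|V(C)|\le 3$ whose vertices have a common fault-free neighbour $w$ would give condition (2) or (3); the odd-component lemma is the tool that forces such a shared neighbour to exist when $C$ is small.

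Once $F_{1}\cap F_{2}$ is shown to be a $3$-extra cut, Theorem \ref{4n-9} gives $|F_{1}\cap F_{2}|\geq 4n-9$. Combined with $|F_{2}\setminus F_{1}|\geq 4$ (every component of $CQ_{n}[F_{2}\setminus F_{1}]$ has order $\geq 4$), this yields $|F_{2}|=|F_{1}\cap F_{2}|+|F_{2}\setminus F_{1}|\geq 4n-5$, contradicting $|F_{2}|\leq 4n-6$. The case $F_{1}\setminus F_{2}=\emptyset$ is handled separately and more easily (then $F_{1}\cap F_{2}=F_{1}$ is directly a $3$-extra faulty set, and one still reaches the bound).

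The main obstacle is the middle step: proving that no component of $CQ_{n}[F_{1}\triangle F_{2}]$ can be small. The cases where a component of $CQ_{n}-(F_{1}\cup F_{2})$ is nonempty but the neighbourhood structure is delicate — e.g. when $V(CQ_{n})\setminus(F_{1}\cup F_{2})$ induces a graph each of whose vertices sends edges into $F_{1}\triangle F_{2}$ — require the careful counting: one must bound $|F_{1}\cup F_{2}|\leq 2(4n-6)=8n-12$ against the number of edges leaving a small component (at least $\sum_{v}d_{CQ_{n}}(v) - 2|E(CQ_{n}[C])|$, which for $CQ_{n}$ is large since it is $n$-regular), and invoke Lemma \ref{odd} to extract the common neighbour $w$ realizing condition (2) or (3). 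I expect this edge-counting plus Tutte-condition argument, applied across the subcases distinguished by the order of the smallest component, to be where all the real work lies; the rest is bookkeeping parallel to Lemma \ref{pmct2}.
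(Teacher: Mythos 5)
Your overall skeleton (reduce to showing no edge between $V(CQ_{n})\setminus(F_{1}\cup F_{2})$ and $F_{1}\triangle F_{2}$, conclude $F_{1}\cap F_{2}$ is a 3-extra cut, apply Theorem \ref{4n-9} and $|F_{2}\setminus F_{1}|\geq 4$ to get $|F_{2}|\geq 4n-5$) matches the paper's endgame, but the middle step is where the real content lies, and your plan for it does not work. The correct observation is the one you half-state: if $w\notin F_{1}\cup F_{2}$ has a neighbour in $F_{1}\triangle F_{2}$, then by the failure of condition (1) $w$ is \emph{isolated} in $CQ_{n}-(F_{1}\cup F_{2})$. So the crux is a separate claim that $CQ_{n}-(F_{1}\cup F_{2})$ has no isolated vertices, and this is the bulk of the paper's proof. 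There, the failure of conditions (2) and (3) shows each isolated vertex $w$ has exactly one neighbour in $F_{2}\setminus F_{1}$ and exactly one in $F_{1}\setminus F_{2}$, hence $n-2$ neighbours in $F_{1}\cap F_{2}$; Lemma \ref{odd} together with Proposition \ref{match} (the cross edges form a perfect matching, so Tutte's condition applies) is used only to bound the number $|W|$ of isolated vertices by $|F_{1}\cup F_{2}|\leq 2(4n-6)-(n-2)=7n-10$; one then pins down $|F_{1}\cap F_{2}|=4n-9$, $|F_{1}\setminus F_{2}|=|F_{2}\setminus F_{1}|=3$, and invokes Theorem \ref{tcq7} (tightly $(4n-9)$ super 3-extra connectivity, the paper's main structural theorem and the reason for the hypothesis $n\geq 7$) to force the non-small component $H$ to have exactly $4$ vertices, giving $2^{n}\leq 3+3+(4n-9)+(7n-10)+4=11n-9$, a contradiction.

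Your substitute for this step fails on two counts. First, Lemma \ref{odd} cannot "extract a common fault-free neighbour" of a small component of $CQ_{n}[F_{2}\setminus F_{1}]$; Tutte's condition only bounds the number of odd components after deleting a set, and that is all it is used for here. (Moreover, once the no-edge property is in hand, the claim that components of $CQ_{n}[F_{2}\setminus F_{1}]$ have order at least $4$ follows simply from $F_{1}$ being a 3-extra faulty set, not from conditions (2)--(3).) Second, your edge-counting sketch never uses the tightly super 3-extra connectivity; without Theorem \ref{tcq7} (or an equivalent), the component $H$ in the isolated-vertex analysis can be large and no counting contradiction of the form $2^{n}\leq 11n-9$ is available, so the isolated-vertex case is not closed. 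As written, the proposal has a genuine gap exactly at the step you yourself flag as "where all the real work lies."
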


\begin{proof} By the definition of 3-extra diagnosability, it is sufficient to show that $CQ_{n}$ is 3-extra $(4n-6)$-diagnosable. On the contrary, there are two distinct 3-extra faulty subsets $F_{1}$ and $F_{2}$ of $CQ_{n}$ with $|F_{1}|\leq 4n-6$ and $|F_{2}|\leq 4n-6$, but the vertex set pair $(F_{1},F_{2})$ is not satisfied with any one condition in Theorem \ref{mmt}. Without loss of generality, assume that $F_{2}\setminus F_{1}\neq\emptyset$.
By Claim 1 in Lemma 4.2, we get $V(CQ_{n})\neq F_{1}\cup F_{2}$.

{\it Claim 1.} $CQ_{n}-(F_{1}\cup F_{2})$ has no isolated vertex.

On the contrary, we suppose that $CQ_{n}-(F_{1}\cup F_{2})$ has at least one isolated vertex $w$. Since $F_{1}$ is one 3-extra faulty set, there is a vertex $u\in F_{2}\setminus F_{1}$ such that $u$ is adjacent to $w$. Note that the vertex set pair $(F_{1},F_{2})$ is not satisfied with any one condition in Theorem \ref{mmt}. By the condition (3) of Theorem \ref{mmt}, there is at most one vertex $u\in F_{2}\setminus F_{1}$ such that $u$ is adjacent to $w$. Thus, there is just a vertex $u\in F_{2}\setminus F_{1}$ such that $u$ is adjacent to $w$.
If $F_{1}\setminus F_{2}=\emptyset$, then $F_{1}\subseteq F_{2}$. Since $F_{2}$ is a 3-extra faulty set, every component $G_{i}$ of $CQ_{n}-F_{2}$ has $|V(G_{i})|\geq4$. Note that $CQ_{n}-F_{2}=CQ_{n}-(F_{1}\cup F_{2})$. So $CQ_{n}-(F_{1}\cup F_{2})$ has no isolated vertex. It is contradict with the hypothesis. Thus, $F_{1}\setminus F_{2}\neq\emptyset$.
Similarly, we can deduce that there is just a vertex $v\in F_{1}\setminus F_{2}$ such that $v$ is adjacent to $w$. Let $W\subseteq V(CQ_{n})\setminus (F_{1}\cup F_{2})$ be the set of isolated vertices in $CQ_{n}[V(CQ_{n})\setminus (F_{1}\cup F_{2})]$, and let $H$ be the induced subgraph by the vertex set $V(CQ_{n})\setminus (F_{1}\cup F_{2}\cup W)$. Then for any vertex $w\in W$, we can get that $w$ has $(n-2)$ neighbors in $F_{1}\cap F_{2}$. By Lemma \ref{odd} and Proposition \ref{match}, $|W|\leq o(CQ_{n}-(F_{1}\cup F_{2}))\leq |F_{1}\cup F_{2}|=|F_{1}|+|F_{2}|-|F_{1}\cap F_{2}|\leq 2(4n-6)-(n-2)=7n-10$. We assume that $V(H)=\emptyset$. Then $2^{n}=|V(CQ_{n})|=|F_{1}\cup F_{2}|+|W|=|F_{1}|+|F_{2}|-|F_{1}\cap F_{2}|+|W|\leq 2(4n-6)-(n-2)+(7n-10)=14n-20$, a contradiction to that $n\geq7$. Therefore, $V(H)\neq\emptyset$.

Since the vertex set pair $(F_{1},F_{2})$ is not satisfied with the condition (1) of Theorem \ref{mmt} and any vertex of $V(H)$ is not isolated in $H$, we induce that there is no edge between $V(H)$ and $F_{1}\triangle F_{2}$. If $F_{1}\cap F_{2}=\emptyset$, then $F_{1}\triangle F_{2}=F_{1}\cup F_{2}$. Since there is no edge between $V(H)$ and $F_{1}\triangle F_{2}$,
$CQ_{n}$ is disconnected, a contradiction to that $CQ_{n}$ is connected. Thus, $F_{1}\cap F_{2}\neq\emptyset$.
Note that $CQ_{n}-(F_{1}\cap F_{2})$ has two parts: $H$ and $CQ_{n}[(F_{1}\setminus F_{2})\cup (F_{2}\setminus F_{1})\cup W]$. Thus, $F_{1}\cap F_{2}$ is a vertex cut of $CQ_{n}$.
Since $F_{1}$ is a 3-extra faulty set of $CQ_{n}$, we have that every component $H_{i}$ of $H$ has $|V(H_{i})|\geq4$ and every component $B_{i}^{2}$ of $CQ_{n}[(F_{2}\setminus F_{1})\cup W]$ has $|V(B_{i}^{2})|\geq4$. Similarly, every component $B_{i}^{1}$ of $CQ_{n}[(F_{1}\setminus F_{2})\cup W]$ has $|V(B_{i}^{1})|\geq4$. By the definition of 3-extra cut, $F_{1}\cap F_{2}$ is a 3-extra cut of $CQ_{n}$. By Theorem \ref{4n-9}, $|F_{1}\cap F_{2}|\geq 4n-9$.

Note that any vertex $w\in W$ has two neighbors $u$ and $v$ such that $u\in V(F_{1}\setminus F_{2})$ and $v\in V(F_{2}\setminus F_{1})$. If there is not a vertex $w$ such that $w\in V(B_{i}^{2})$, then $B_{i}^{2}$ is a component of $F_{2}\setminus F_{1}$. We get that
$|F_{2}\setminus F_{1}|\geq|V(B_{i}^{2})|\geq4$.
If there is a vertex $w$ such that $w\in V(B_{i}^{2})$, then $B_{i}^{2}-w$ is a component of $F_{2}\setminus F_{1}$. We get that $|F_{2}\setminus F_{1}|\geq |V(B_{i}^{2})\setminus\{u\}|\geq3$. Thus, $|F_{2}\setminus F_{1}|\geq3$. Similarly, $|F_{1}\setminus F_{2}|\geq3$.
Since $|F_{1}\cap F_{2}|=|F_{2}|-|F_{2}\setminus F_{1}|\leq (4n-6)-3=4n-9$, we have $|F_{1}\cap F_{2}|=4n-9$. Then  we get that $|F_{2}\setminus F_{1}|=3$ and $|F_{2}|=4n-6$. Similarly, we have $|F_{1}\setminus F_{2}|=3$ and $|F_{1}|=4n-6$. By Theorem \ref{tcq7}, the crossed cube $CQ_{n}$ is tightly $(4n-9)$ super 3-extra connected, i.e., $CQ_{n}-(F_{1}\cap F_{2})$ has two components, one of which is a subgraph of order 4. Thus,  $CQ_{n}-(F_{1}\cap F_{2})$ has two components, one of which is $CQ_{n}[(F_{1}\setminus F_{2})\cup (F_{2}\setminus F_{1})\cup W]$ and the other one is $H$ with $|V(H)|=4$ and $|(F_{1}\setminus F_{2})\cup (F_{2}\setminus F_{1})\cup W|\geq 3+3+1=7$. Note that $|W|\leq7n-10$.
$2^{n}=|V(CQ_{n})|=|F_{1}\setminus F_{2}|+|F_{2}\setminus F_{1}|+|F_{1}\cap F_{2}|+|W|+|V(H)|\leq3+3+(4n-9)+(7n-10)+4=11n-9$, a contradiction to $n\geq6$. The proof of Claim 1 is complete.

Let $u\in V(CQ_{n})\setminus (F_{1}\cup F_{2})$. By Claim 1 in Lemma \ref{mm2}, $u$ has at least one neighbor in $CQ_{n}-(F_{1}\cup F_{2})$. Since $(F_{1},F_{2})$ is not satisfied with any one condition in Theorem \ref{mmt}, $u$ has no neighbor in $F_{1}\vartriangle F_{2}$. By the arbitrariness of $u$, there is no edge between $V(CQ_{n})\setminus (F_{1}\cup F_{2})$ and $F_{1}\vartriangle F_{2}$.
Since $F_{1}$ and $F_{2}$ are two 3-extra faulty set,
every component $H_{i}$ of $CQ_{n}-(F_{1}\cup F_{2})$ has $|V(H_{i})|\geq4$, every component $B_{i}$ of $CQ_{n}([F_{2}\setminus F_{1}])$ has $|V(B_{i})|\geq4$, and every component $C_{i}$ of $CQ_{n}([F_{1}\setminus F_{2}])$ has $|V(C_{i})|\geq4$ when $F_{1}\setminus F_{2}\neq\emptyset$. Thus, $F_{1}\cap F_{2}$ is also a 3-extra faulty set. Since there is no edge between $V(CQ_{n}\setminus(F_{1}\cup F_{2}))$ and $F_{1}\vartriangle F_{2}$, we have $F_{1}\cap F_{2}$ is a 3-extra cut of $CQ_{n}$.
By Theorem \ref{4n-9}, we have $|F_{1}\cap F_{2}|\geq 4n-9$. Since $B_{i}$ is a component of $CQ_{n}([F_{2}\setminus F_{1}])$ with $|V(B_{i})|\geq4$, we have $|F_{2}\setminus F_{1}|\geq |V(B_{i})|\geq4$.
Therefore, $|F_{2}|=|F_{2}\setminus F_{1}|+|F_{1}\cap F_{2}|\geq 4+(4n-9)=4n-5$, which contradicts $|F_{2}|\leq 4n-6$. Thus, $CQ_{n}$ is 3-extra $(4n-6)$-diagnosable, i.e., $\tilde{t_{3}}(CQ_{n})\geq 4n-6$. The proof is complete.
\end{proof}

Combining Lemmas \ref{pmct1} and \ref{mm2}, we can get the following theorem.

\begin{theorem} Let $n\geq7$. Then the 3-extra diagnosability of the crossed cube $CQ_{n}$ under the MM* model is $4n-6$, i.e., $\tilde{t_{2}}(CQ_{n})=4n-6$.
\end{theorem}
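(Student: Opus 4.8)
The plan is to obtain the result by sandwiching $\tilde{t_{3}}(CQ_{n})$ between the upper bound already supplied and a matching lower bound. First I would invoke Lemma~\ref{pmct1}, which gives $\tilde{t_{3}}(CQ_{n})\leq 4n-6$ for all $n\geq 4$, hence in particular for $n\geq 7$; this handles one inequality immediately with no extra work, since the construction using $A$ and $F=N_{CQ_{n}}(A)$ from Lemma~\ref{A} already exhibits two indistinguishable $3$-extra faulty sets of size $4n-5$ under the MM* model. Then I would invoke Lemma~\ref{mm2}, which gives $\tilde{t_{3}}(CQ_{n})\geq 4n-6$ for $n\geq 7$. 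Combining the two inequalities yields $\tilde{t_{3}}(CQ_{n})=4n-6$, which is exactly the statement to be proved (modulo the typographical slip $\tilde{t_{2}}$ for $\tilde{t_{3}}$ in the displayed formula).

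Since both halves are already established as lemmas in the excerpt, the theorem itself requires only the assembly step, so there is essentially no obstacle left at this level — the real content lives in Lemma~\ref{mm2}. If I were asked to reconstruct that lemma's argument as well, the approach would be: assume for contradiction that $CQ_{n}$ is not $3$-extra $(4n-6)$-diagnosable, take two distinct $3$-extra faulty sets $F_{1},F_{2}$ of size at most $4n-6$ violating every clause of Theorem~\ref{mmt}, and argue in two stages. Stage one (Claim~1 of Lemma~\ref{mm2}) rules out isolated vertices in $CQ_{n}-(F_{1}\cup F_{2})$ by a counting argument: each such vertex would have $n-2$ neighbours forced into $F_{1}\cap F_{2}$, and then bounding $|W|$ (the set of isolated vertices) via the Tutte-type Lemma~\ref{odd} together with the perfect-matching Proposition~\ref{match}, and finally using tight super $3$-extra connectivity (Theorem~\ref{tcq7}) to pin down $|F_{1}\cap F_{2}|=4n-9$, leads to the vertex-count contradiction $2^{n}\leq 11n-9$ for $n\geq 6$.

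Stage two then shows that, with no isolated vertices, any $u\in V(CQ_{n})\setminus(F_{1}\cup F_{2})$ has a fault-free neighbour, so clause~(1) of Theorem~\ref{mmt} would be satisfiable unless there is no edge at all between $V(CQ_{n})\setminus(F_{1}\cup F_{2})$ and $F_{1}\triangle F_{2}$; this forces $F_{1}\cap F_{2}$ to be a $3$-extra cut, whence $|F_{1}\cap F_{2}|\geq 4n-9$ by Theorem~\ref{4n-9}, and since $F_{2}\setminus F_{1}$ is a union of components each of order $\geq 4$ we get $|F_{2}|\geq (4n-9)+4=4n-5$, contradicting $|F_{2}|\leq 4n-6$. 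The step I expect to be the genuine bottleneck is the isolated-vertex elimination in Stage one: it is where all the heavy machinery (Tutte's theorem, the perfect matching of cross edges, and the tight super $3$-extra connectivity theorem) has to be combined carefully, and where the hypothesis $n\geq 7$ is actually needed — the looser bounds there only give contradictions for $n\geq 6$, so the sharper threshold $n\geq 7$ comes from invoking Theorem~\ref{tcq7}, whose own validity requires $n\geq 7$. Everything after that point is routine bookkeeping with cardinalities.
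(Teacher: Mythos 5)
Your proposal is correct and takes essentially the same route as the paper: the theorem is obtained simply by combining the upper bound of Lemma~\ref{pmct1} with the lower bound of Lemma~\ref{mm2}, and your reconstruction of the latter's two-stage argument (isolated-vertex elimination via Lemma~\ref{odd}, Proposition~\ref{match} and Theorem~\ref{tcq7}, then the cut-size contradiction via Theorem~\ref{4n-9}) matches the paper's proof. You are also right that $\tilde{t_{2}}$ in the statement is a typographical slip for $\tilde{t_{3}}$.
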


\section{Conclusions}

We prove that the 3-extra connectivity of $CQ_{n}$ is $4n-9$ for $n\geq5$. Moreover, $CQ_{n}$ is tightly $(4n-9)$ super 3-extra connected for $n\geq 7$. Then we determine that the 3-extra diagnosability of $CQ_{n}$ is $4n-6$ under the PMC model $(n\geq5)$ and MM* model $(n\geq7)$. On the basis of this study, the researchers can continue to study the $g$-extra connectivity and diagnosability of networks.

\section*{Acknowledgements}

The authors would like to thank the Natural Science Foundation of China
(61370001).

\end{document}